
\documentclass[reqno, eucal]{amsart}

\usepackage{epic,eepic}
\usepackage[mathscr]{eucal} 
\usepackage{psfrag}     

\setcounter{tocdepth}{2}


\usepackage{amsmath}
\usepackage{amsthm}
\usepackage[dvips]{color}

\usepackage[dvips]{graphicx,color}  

\usepackage{amssymb}
\usepackage{epic,eepic}
\usepackage{amscd}
\usepackage{longtable}
\usepackage{array}

\setlength{\oddsidemargin}{0.1cm}
\setlength{\evensidemargin}{-0.1cm}
\setlength{\textwidth}{16cm}
\setlength{\topmargin}{-0.4cm}
\setlength{\textheight}{24cm}

\newtheorem{theorem}{Theorem}
\newtheorem{lemma}[theorem]{Lemma}

\theoremstyle{definition}

\newtheorem{example}[theorem]{Example}
\newtheorem{remark}[theorem]{Remark}

\newcommand{\bk}{{\bf k}}
\newcommand{\bb}{{\bf b}}
\newcommand{\bc}{{\bf c}}

\newcommand{\g}{g}

\newcommand{\Z}{{\mathbb Z}}
\newcommand{\R}{{\mathbb R}}
\newcommand{\C}{{\mathbb C}}


\begin{document}

\title[Zamolodchikov-Faddeev algebra for $U_q(A^{(1)}_n)$]
{A $q$-boson representation of Zamolodchikov-Faddeev algebra\\
for stochastic $R$ matrix of $\boldsymbol{U_q(A^{(1)}_n)}$}
\author{Atsuo Kuniba}
\email{atsuo@gokutan.c.u-tokyo.ac.jp}
\address{Institute of Physics, University of Tokyo, Komaba, Tokyo 153-8902, Japan}

\author{Masato Okado}
\email{okado@sci.osaka-cu.ac.jp}
\address{Department of Mathematics, Osaka City University, 
3-3-138, Sugimoto, Sumiyoshi-ku, Osaka, 558-8585, Japan}


\maketitle

\vspace{0.3cm}
\begin{center}{\bf Abstract}\end{center}
We construct a $q$-boson representation of the  
Zamolodchikov-Faddeev algebra whose structure function is 
given by the stochastic $R$ matrix of $U_q(A^{(1)}_n)$ introduced recently.
The representation involves quantum dilogarithm type infinite products in the 
$n(n-1)/2$-fold tensor product of $q$-bosons. 
It leads to a matrix product formula of the stationary probabilities
in the $U_q(A_n^{(1)})$-zero range process on a one-dimensional periodic lattice.

\vspace{0.4cm}

\section{Introduction}\label{sec:int}

Exploring integrable models in non-equilibrium statistical mechanics is
a significant branch of applications of quantum group theory.
In such approaches to Markov processes, one typically tries to construct a Markov matrix 
whose spectral problem can be solved 
by Bethe ansatz and the Yang-Baxter equation \cite{Bax}.
Whether it is possible or not relies on the fundamental question; 
can a quantum $R$ matrix \cite{D,J} be made {\em stochastic}?
This was answered affirmatively in \cite{KMMO} for 
the $U_q(A^{(1)}_n)$ quantum $R$ matrix 
intertwining the symmetric tensor representations of arbitrary degree.
The modified one, 
called {\it stochastic $R$ matrix},
possesses nonnegative elements only and fulfills a local version of
total probability conservation called the {\em sum-to-unity condition}.
The associated continuous and discrete time Markov processes 
are formulated as stochastic dynamics of $n$ species of particles
on one-dimensional lattice obeying a zero range type interaction (cf. \cite{EH}).
We call them $U_q(A^{(1)}_n)$-zero range processes.
They include several models studied earlier 
for $n=1$ \cite{SW,P,T0,CP,BP} and for $n$ general \cite{T, KMO1,KMO2,KMO}.

In this paper we study the {\em Zamolodchikov-Faddeev (ZF) algebra}
with the structure function given by the $U_q(A^{(1)}_n)$ 
stochastic $R$ matrix $\check{\mathscr{S}}(\lambda, \mu)$ in \cite{KMMO}.
Symbolically it is a family of quadratic relations of the form
\begin{align*}
X(\mu) \otimes X(\lambda) = 
\check{\mathscr{S}}(\lambda, \mu)\bigl[
X(\lambda) \otimes X(\mu)\bigr],
\end{align*}
where $X(\mu) = (X_\alpha(\mu))_{\alpha \in \Z_{\ge 0}^n}$
denotes a collection of the operator $X_\alpha(\mu)$ associated with
each local state $\alpha$ of the $U_q(A^{(1)}_n)$-zero range process.
The parameters $\lambda, \mu$ are reminiscent of the 
degrees of the symmetric tensors (magnitude of ``spins") 
which can be utilized to describe the inhomogeneity of the system.
See (\ref{sxx}) for a concrete description.

The ZF algebra originates in integrable quantum field theory 
in $(1\!+\!1)$-dimension and encodes the factorized scattering of particles \cite{Z,F}.
The structure function therein is the scattering matrix of the theory which should be 
properly normalized so as to satisfy the unitarity.  

In the realm of integrable Markov processes, the situation is parallel.
The ZF algebra serves as a local version of the stationary condition
in the matrix product construction of the stationary states.
Its infinitesimal version, often called ``hat relation" or 
``cancellation mechanism", has been utilized in 
many works, e.g., \cite{DEHP, PEM, CRV, KMO2, KMO}.
As in the factorized scattering theory, 
it is crucial to adopt the correct normalization of 
$\check{\mathscr{S}}(\lambda, \mu)$ 
since the ZF algebra is {\em inhomogeneous} in the structure function 
unlike the more commonly argued $RLL=LLR$ type relation.
In our setting, the normalization is canonically fixed by the 
sum-to-unity condition mentioned above. See (\ref{psum}).

The main result of this paper is a $q$-boson representation of the 
so defined ZF algebra.
More precisely, we construct an algebra homomorphism from it to the 
tensor product $\mathcal{B}^{\otimes n(n-1)/2}$ where 
$\mathcal{B}$ is the $q$-boson algebra defined in (\ref{akn}).
We will give either a recursive characterization with respect to the rank $n$
(Theorem \ref{th:main}) or an explicit formula (Theorem \ref{th:nzm}).
The latter contains quantum dilogarithm type infinite products,
which is a distinctive feature reflecting the fact that 
the stochastic $R$ matrix in \cite{KMMO} formally originates in the 
symmetric tensor representation of $U_q(A^{(1)}_n)$ of infinite degree. 
It provides the first systematic result beyond 
the simplest choice of the structure function \cite{CDW, GDW}
where representations of the ZF algebra  
associated with the stochastic $R$ matrix 
of the vector representation of $U_q(A^{(1)}_n)$ were studied.

The $q$-boson algebra $\mathcal{B}$ 
has a natural representation on the Fock space $F$ as in (\ref{yrk}).
As an application, we present a matrix product formula for the 
stationary probabilities of the $U_q(A^{(1)}_n)$-zero range process,
which is expressed as a trace over $F^{\otimes n(n-1)/2}$ (Theorem \ref{th:sin}).
It is a corollary of the {\em auxiliary condition} \cite[eq.(30)]{KO} 
satisfied by the $q$-boson representation as well as the ZF algebra relation.
These results extend the earlier ones for $n=2$ \cite{KO} to general $n$.

The layout of the paper is as follows.
In Section \ref{sec:sr} we quote the stochastic $R$ matrix for 
$U_q(A^{(1)}_n)$ from \cite{KMMO}.
In Section \ref{sec:zf} we introduce the ZF algebra and 
give a $q$-boson representation.
In Section \ref{sec:zrp} the 
$U_q(A^{(1)}_n)$-zero range process associated with the 
stochastic $R$ matrix \cite{KMMO} is recalled briefly, 
and a matrix product formula of the stationary probabilities is presented.
Section \ref{sec:sum} is a summary.
Appendix \ref{ap:proof} contains a proof of Theorem \ref{th:main}.

Throughout the paper we assume that $q$ is generic unless otherwise stated  
and use the notation
$\theta(\mathrm{true})=1, 
\theta(\mathrm{false}) =0$,
the $q$-Pochhammer symbol
$(z)_m = (z; q)_m = \prod_{j=1}^m(1-zq^{j-1})$
and the $q$-binomial 
$\binom{m}{k}_{\!q} = \theta(k \in [0,m])
\frac{(q)_m}{(q)_k(q)_{m-k}}$.
The symbols $(z)_m$ appearing in this paper always mean $(z; q)_m$.
For integer arrays 
$\alpha=(\alpha_1,\ldots, \alpha_m), \beta=(\beta_1,\ldots, \beta_m)$ 
of {\em any} length $m$, we write 
$|\alpha | = \alpha_1+\cdots  + \alpha_m$.  
The relation $\alpha \le \beta$ or equivalently  
$\beta \ge \alpha$ is defined by $\beta-\alpha \in \Z^m_{\ge 0}$. 
We often denote $0^m:=(0, \ldots, 0) \in \Z^m_{\ge 0}$ simply by $0$ 
when it is clear from the context.

\section{Stochastic $R$ matrix for $U_q(A^{(1)}_n)$}\label{sec:sr}
Set 
$W = \bigoplus_{\alpha=(\alpha_1,\ldots, \alpha_n) \in \Z_{\ge 0}^n}
\C |\alpha\rangle$.
Define the operator $\mathscr{S}(\lambda,\mu) \in \mathrm{End}(W \otimes W)$ 
depending on the parameters $\lambda$ and $\mu$ by
\begin{align}
&\mathscr{S}(\lambda,\mu)(|\alpha\rangle \otimes |\beta\rangle ) = 
\sum_{\gamma,\delta \in 
\Z_{\ge 0}^n}\mathscr{S}(\lambda,\mu)_{\alpha,\beta}^{\gamma,\delta}
\,|\gamma\rangle \otimes |\delta\rangle,
\label{ask1}\\
&\mathscr{S}(\lambda,\mu)^{\gamma,\delta}_{\alpha, \beta} 
= \theta(\alpha+\beta=\gamma+\delta)
\Phi_q(\gamma | \beta; \lambda,\mu), \label{ask2}
\end{align}
where $\Phi_q(\gamma | \beta; \lambda,\mu)$ with 
$\beta = (\beta_1,\ldots, \beta_n) \in \Z_{\ge 0}^n, 
\gamma=(\gamma_1,\ldots, \gamma_n) \in \Z_{\ge 0}^n$ is given by
\begin{align}
\Phi_q(\gamma|\beta; \lambda,\mu)  = 
q^{\varphi(\beta-\gamma, \gamma)} \left(\frac{\mu}{\lambda}\right)^{|\gamma|}
\frac{(\lambda)_{|\gamma|}(\frac{\mu}{\lambda})_{|\beta|-|\gamma|}}
{(\mu)_{|\beta|}}
\prod_{i=1}^{n}\binom{\beta_i}{\gamma_i}_{\!q},
\qquad
\varphi(\beta,\gamma) &= \sum_{1 \le i < j \le n}\beta_i\gamma_j. \label{mho}
\end{align}
The sum (\ref{ask1}) is finite due to the $\theta$ factor in (\ref{ask2}).
The difference property 
$\mathscr{S}(\lambda,\mu)=\mathscr{S}(c\lambda,c\mu)$ is absent.
We call $\mathscr{S}(\lambda, \mu)$ 
the {\em stochastic $R$ matrix} \cite{KMMO}.
It originates in the 
quantum $R$ matrix \cite{D, J} of the symmetric tensor representation of 
the quantum affine algebra $U_q(A^{(1)}_n)$.
It satisfies the Yang-Baxter equation, 
the inversion relation and the sum-to-unity condition \cite{KMMO,KO}:
\begin{align}
&\mathscr{S}_{1,2}(\nu_1,\nu_2)
\mathscr{S}_{1,3}(\nu_1, \nu_3)
\mathscr{S}_{2,3}(\nu_2, \nu_3)
=
\mathscr{S}_{2,3}(\nu_2, \nu_3)
\mathscr{S}_{1,3}(\nu_1, \nu_3)
\mathscr{S}_{1,2}(\nu_1,\nu_2),
\label{sybe}\\
&\mathscr{S}^T_{1,2}(\nu_1,\nu_2)
\mathscr{S}^T_{1,3}(\nu_1, \nu_3)
\mathscr{S}^T_{2,3}(\nu_2, \nu_3)
=
\mathscr{S}^T_{2,3}(\nu_2, \nu_3)
\mathscr{S}^T_{1,3}(\nu_1, \nu_3)
\mathscr{S}^T_{1,2}(\nu_1,\nu_2),
\label{tybe}\\
&\check{\mathscr{S}}(\lambda, \mu)
\check{\mathscr{S}}(\mu,\lambda)
= \mathrm{id}_{W^{\otimes 2}},
\label{sinv}\\
&\sum_{\gamma, \delta \in \Z_{\ge 0}^n}
\mathscr{S}(\lambda,\mu)^{\gamma,\delta}_{\alpha, \beta} =1\qquad
(\forall \alpha, \beta \in \Z_{\ge 0}^n),
\label{psum}
\end{align}
where the checked $R$ matrix $\check{\mathscr{S}}(\lambda,\mu)$ and 
the transposed $R$ matrix $\mathscr{S}^T(\lambda,\mu)$ are defined by 
$\check{\mathscr{S}}(\lambda,\mu)(|\alpha\rangle \otimes |\beta\rangle ) = 
\sum_{\gamma,\delta}
\mathscr{S}(\lambda,\mu)_{\alpha,\beta}^{\gamma,\delta}
\,|\delta\rangle \otimes |\gamma\rangle$, 
and $\mathscr{S}^T(\lambda,\mu)(|\alpha\rangle \otimes |\beta\rangle ) = 
\sum_{\gamma,\delta \in 
\Z_{\ge 0}^n}\mathscr{S}(\lambda,\mu)^{\alpha,\beta}_{\gamma,\delta}
\,|\gamma\rangle \otimes |\delta\rangle$.
The matrix elements are depicted as 
\begin{equation}\label{vertex}
\begin{picture}(200,45)(-90,-21)
\put(-70,-2){$\mathscr{S}(\lambda,\mu)^{\gamma,\delta}_{\alpha, \beta}\,=$}
\put(0,0){\vector(1,0){24}}
\put(12,-12){\vector(0,1){24}}
\put(-10,-2){$\alpha$}\put(27,-2){$\gamma$}
\put(9,-22){$\beta$}\put(9,16){$\delta$}
\end{picture}
\end{equation}

When preferable, we will 
exhibit the $n$-dependence of (\ref{mho}) as 
$\Phi^{(n)}_q(\gamma | \beta; \lambda, \mu)$.
The function $\Phi^{(1)}_q(\gamma | \beta; \lambda, \mu)$
appeared in Povolotsky's chipping model \cite{P}, which 
stimulated many subsequent studies.
It was also built in the explicit formula of $U_q(A^{(1)}_1)$ $R$ matrix 
and $Q$-operators by Mangazeev \cite{ M1}.
 
For $n$ general it is zero unless $\gamma \le \beta$, and 
satisfies the sum rule \cite{KMMO}:
\begin{align}\label{syk}
\sum_{\gamma \in \Z_{\ge 0}^n}
\Phi^{(n)}_q(\gamma | \beta; \lambda,\mu)
= 1 \quad(\forall \beta \in \Z_{\ge 0}^n),
\end{align}
which may be viewed as a corollary of (\ref{psum}).

For an array of nonnegative integers 
$\alpha=(\alpha_1,\ldots, \alpha_m)$ with {\em any } length $m$,
we use the notation
\begin{align}\label{ask}
\overline{\alpha} = (\alpha_2,\ldots, \alpha_{m}),\quad
\underline{\alpha} = (\alpha_1,\ldots, \alpha_{m-1}).
\end{align}
Then it is straightforward to check
\begin{align}
\Phi^{(n)}_q(\gamma|\alpha; \lambda, \mu) 
&= \Phi_q^{(1)}(\gamma_1|\alpha_1; \lambda, \mu)
\Phi_q^{(n-1)}(\overline{\gamma}|\overline{\alpha}; q^{\gamma_1}\lambda,
q^{\alpha_1}\mu)
\label{yuk}\\
&= \Phi_q^{(n-1)}(\underline{\gamma}|\underline{\alpha}; \lambda, \mu)
\Phi_q^{(1)}(\gamma_n|\alpha_n; q^{|\underline{\gamma}|}\lambda,
q^{|\underline{\alpha}|}\mu). \nonumber
\end{align}

\section{$q$-boson representation of Zamolodchikov-Faddeev algebra}\label{sec:zf}

\subsection{Zamolodchikov-Faddeev algebra}
With an array $\alpha =(\alpha_1,\ldots, \alpha_n)  \in \Z^n_{\ge 0}$ 
and a parameter $\mu$, 
we associate an operator $X_\alpha(\mu)$.
By the Zamolodchikov-Faddeev algebra for the stochastic $R$ matrix 
$\mathscr{S}(\lambda,\mu)$ we mean the following family of quadratic relations:
\begin{align}\label{sxx}
X_\alpha(\mu)X_\beta(\lambda) = 
\sum_{\gamma,\delta \in \Z^n_{\ge 0}}
\mathscr{S}(\lambda, \mu)^{\beta, \alpha}_{\gamma,\delta}
X_\gamma(\lambda)X_\delta(\mu).
\end{align}
It is associative due to the Yang-Baxter equation (\ref{tybe}).
From (\ref{ask2}) it reads more explicitly as
\begin{align}\label{kag}
X_\alpha(\mu)X_\beta(\lambda) = 
\sum_{\gamma \le \alpha}\Phi_q(\beta|\alpha+\beta-\gamma;\lambda, \mu)
X_{\gamma}(\lambda)X_{\alpha+\beta-\gamma}(\mu),
\end{align}
where the omitted condition $\gamma \in \Z_{\ge 0}^n$ should
always be taken for granted.
We find it convenient to work also with  
another normalization $Z_\alpha(\mu)$ specified by
\begin{align}\label{aim}
X_\alpha(\mu) = \g_\alpha(\mu)Z_\alpha(\mu),\qquad
g_\alpha(\mu)=\frac{\mu^{-|\alpha|}(\mu)_{|\alpha|}}
{\prod_{i=1}^n(q)_{\alpha_i}}.
\end{align}
The ZF algebra for the latter takes the form 
\begin{align}
&Z_\alpha(\mu)Z_\beta(\lambda) = 
\sum_{\gamma \le \alpha}
q^{\varphi(\alpha-\gamma, \beta-\gamma)}
\Phi_q(\gamma|\alpha; \lambda,\mu)
Z_\gamma(\lambda)Z_{\alpha+\beta-\gamma}(\mu)\label{mrn}
\end{align}
due to the identity 
\begin{align*}
\frac{\g_\gamma(\lambda)\g_{\alpha+\beta-\gamma}(\mu)}
{\g_\alpha(\mu)\g_{\beta}(\lambda)}
\Phi_q(\beta|\alpha+\beta-\gamma;\lambda, \mu)
=q^{\varphi(\alpha-\gamma, \beta-\gamma)}
\Phi_q(\gamma|\alpha;\lambda, \mu).
\end{align*}

\subsection{$q$-boson representation}\label{ss:qb}

Let $\mathcal{B}$ be the algebra 
generated by $1, \bb,\bc, \bk$ obeying the relations 
\begin{align}\label{akn}
\bk \bb = q\bb \bk,\qquad \bk\bc = q^{-1} \bk \bc,\qquad
\bb \bc = 1 - \bk,\qquad \bc \bb = 1-q\bk.
\end{align}
We call it the $q$-boson algebra.
It has a basis $\{\bb^i \bc^j\mid i,j \in \Z_{\ge 0}\}$. 
See (\ref{yrk}) for a representation on the Fock space.

The ZF algebra (\ref{mrn}) admits a ``trivial" representation 
$Z_\alpha(\zeta) = K_\alpha$ 
in terms of an operator $K_\alpha$ satisfying 
$K_0 = 1$ and 
$K_\alpha K_\beta 
= q^{\varphi(\alpha, \beta)}K_{\alpha+\beta}$ as shown in \cite[Prop.7]{KO}.
See (\ref{mho}) for the definition of $\varphi(\alpha, \beta)$.
Such a $K_\alpha$ 
is easily constructed, for instance as\footnote{
We write $K_\alpha$ with $\alpha=(\alpha_1,\ldots, \alpha_n)$ as
$K_{\alpha_1,\ldots, \alpha_n}$ rather than
$K_{(\alpha_1,\ldots, \alpha_n)}$ for simplicity.
A similar convention will also be used for   
$g_\alpha(\zeta), X_\alpha(\zeta)$ and $Z_\alpha(\zeta)$.}
\begin{align}\label{K}
K_{\alpha_1,\ldots, \alpha_{n}}=\bk^{\alpha^+_1}\bc^{\alpha_1} \otimes \cdots 
\otimes \bk^{\alpha^+_{n-1}}\bc^{\alpha_{n-1}} \in \mathcal{B}^{\otimes n-1},
\qquad
\alpha^+_i := \alpha_{i+1}+\cdots + \alpha_{n}.
\end{align}
However this is not the representation we are after because it 
does not contain a creation operator $\bb$ and leads to 
vanishing trace in the forthcoming matrix product formula (\ref{mst}).
Our construction of $Z_\alpha(\zeta)$ below may be viewed as
a series expansion starting from 
the trivial representation in terms of creation operators. 

For $\alpha_i \in \Z_{\ge 0}$, 
we define the operator
$Z_{\alpha_1,\ldots, \alpha_{n}}(\zeta) \in \mathcal{B}^{\otimes n(n-1)/2}$
by the $n=1$ case and the recursion relation with respect to $n$ by
\begin{align}
Z_{\alpha_1}(\zeta)&=1,
\label{ini}\\
Z_{\alpha_1,\ldots, \alpha_n}
(\zeta)&= \sum_{l=(l_1,\ldots, l_{n-1}) \in \Z_{\ge 0}^{n-1}}X_l(\zeta)
\otimes \bb^{l_1}\bk^{\alpha^+_1}\bc^{\alpha_1}\otimes \cdots
\otimes \bb^{l_{n-1}}\bk^{\alpha^+_{n-1}}\bc^{\alpha_{n-1}},
\label{mdk}
\end{align}
where $X_l(\zeta)= g_l(\zeta)Z_l(\zeta)$ as in (\ref{aim}) and 
$\alpha^+_i$ is defined by $(\ref{K})$.
Now we present the first half of the main result of the paper.
\begin{theorem}\label{th:main}
The $Z_\alpha(\zeta)$ defined by (\ref{ini})--(\ref{mdk}) 
satisfies the 
ZF algebra (\ref{mrn}) for general $n$.
\end{theorem}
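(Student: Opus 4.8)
The plan is to prove the recursion-defined $Z_\alpha(\zeta)$ satisfies (\ref{mrn}) by induction on $n$, using the factorization (\ref{yuk}) of $\Phi^{(n)}_q$ into $\Phi^{(1)}_q$ times $\Phi^{(n-1)}_q$ with shifted spectral parameters. The base case $n=1$ is immediate: by (\ref{ini}) we have $Z_{\alpha_1}(\zeta)=1$, the only surviving term on the right of (\ref{mrn}) is $\gamma=\alpha$ (since $\Phi^{(1)}_q(\gamma|\alpha;\lambda,\mu)$ vanishes unless $\gamma\le\alpha$ and the $\theta$ in $\alpha+\beta=\gamma+\delta$ forces consistency), and the sum rule (\ref{syk}) for $n=1$ collapses the identity to $1=1$.

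For the inductive step, I would substitute the recursion (\ref{mdk}) for $Z_{\alpha_1,\ldots,\alpha_n}(\lambda)$ and $Z_{\beta_1,\ldots,\beta_n}(\mu)$ into both sides of (\ref{mrn}). On the left this produces $\sum_{l,m} X_l(\mu)X_m(\lambda)\otimes(\text{product of }\bb^{m_i}\bk^{\cdots}\bc^{\cdots}\otimes\bb^{l_i}\bk^{\cdots}\bc^{\cdots})$; the first tensor factor $X_l(\mu)X_m(\lambda)$ can be rewritten by the ZF algebra in rank $n-1$, which holds by the induction hypothesis (after converting between $X$ and $Z$ via (\ref{aim})). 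The remaining $n-1$ tensor components are purely $q$-boson computations: one must commute the strings $\bb^{m_i}\bk^{a_i}\bc^{b_i}\otimes\bb^{l_i}\bk^{c_i}\bc^{d_i}$ past each other using (\ref{akn}), reorganize into the form dictated by the right-hand side of (\ref{mrn}), and match the resulting scalar factors — powers of $q$ of the form $q^{\varphi(\cdots)}$ together with $q$-binomial coefficients produced by normal-ordering products like $\bb^i\bc^j\bb^k$ — against $q^{\varphi(\alpha-\gamma,\beta-\gamma)}\Phi_q(\gamma|\alpha;\lambda,\mu)$. The key structural input is that the $\Phi^{(n-1)}_q$ factor coming from the induction hypothesis carries spectral parameters $q^{l_1}\lambda$ and $q^{m_1}\mu$ (or the $\underline{\ \ }$-shifted versions), exactly matching the shifts in (\ref{yuk}), so that the rank-$(n-1)$ ZF relation combines with the scalar $\Phi^{(1)}_q$ piece to assemble $\Phi^{(n)}_q$.

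I expect the main obstacle to be the $q$-boson bookkeeping in the single tensor slot where the creation operators $\bb^{l_i}$ and $\bb^{m_i}$ interact: one has to expand a product such as $\bc^{\alpha_i}\bb^{m_i}$ in the basis $\{\bb^i\bc^j\}$, which introduces a sum with $q$-binomial weights, and then re-sum over the auxiliary indices $l,m$ against the ZF structure constants. Controlling this double (or triple) summation — showing that after interchanging orders of summation the $q$-series identities reduce to (\ref{syk}) and the $q$-binomial form of the $q$-Vandermonde/Chu identity — is the technical heart. A secondary subtlety is the careful tracking of the $\varphi$-exponents: the function $\varphi(\beta,\gamma)=\sum_{i<j}\beta_i\gamma_j$ is not symmetric, and the split (\ref{yuk}) versus (\ref{mdk}) treats the first coordinate asymmetrically, so one must verify that the accumulated powers of $q$ from repeatedly applying $\bk\bb=q\bb\bk$ reproduce precisely $\varphi(\alpha-\gamma,\beta-\gamma)$ and the $q^{\varphi(\beta-\gamma,\gamma)}$ inside $\Phi_q$. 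Once these scalar identities are isolated, they should follow from standard $q$-Pochhammer manipulations, so I would relegate that verification to the appendix (as the paper does) and present the inductive skeleton in the main text.
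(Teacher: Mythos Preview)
Your inductive skeleton matches the paper's proof: substitute the recursion (\ref{mdk}) into both sides of (\ref{mrn}), apply the rank-$(n-1)$ ZF relation to the first tensor factor, and reduce the remaining $q$-boson bookkeeping to a scalar identity. The anticipated obstacle --- normal-ordering $\bc^{\alpha_i}\bb^{m_i}$ via a $q$-binomial expansion and then comparing coefficients in the basis $\{\bb^i\bc^j\}$ --- is precisely what the paper does (Lemma~\ref{le:hik}, using (\ref{cb})).

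However, your description of how (\ref{yuk}) enters is off. The $\Phi^{(n-1)}_q$ produced by the induction hypothesis does \emph{not} carry shifted spectral parameters: the rank-$(n-1)$ ZF relation applied to $X_l(\mu)X_m(\lambda)$ yields $\Phi^{(n-1)}_q(\cdot\,|\,\cdot\,;\lambda,\mu)$ with the original $\lambda,\mu$, so there is no $\Phi^{(1)}_q$ piece waiting to be combined via (\ref{yuk}) at that stage. The paper's proof has a two-layer structure instead. First (Lemmas~\ref{le:hrk}--\ref{le:hik}), comparing coefficients of $X_t(\lambda)X_s(\mu)$ and then of $q$-boson basis monomials reduces everything to a purely scalar identity (\ref{lin}) equating a sum over $m\le s$ weighted by $\Phi^{(n)}_q(m|s;\lambda,\mu)$ with a sum over $\gamma\le\alpha$ weighted by $\Phi^{(n+1)}_q(\gamma|\alpha;\lambda,\mu)$ --- no shifts anywhere. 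Second (Lemma~\ref{le:kyk}), this scalar identity is proved by a \emph{separate} induction on $n$, and it is only here that (\ref{yuk}) is invoked: splitting off the first coordinate of $\Phi^{(n)}_q$ introduces the shifts $q^{m_1}\lambda,\,q^{s_1}\mu$, the inner induction hypothesis converts the rank-$(n-1)$ sum to a rank-$n$ one, and the residual one-variable summation collapses to the symmetry $f(s,t;\lambda,\mu)=f(t,s;\lambda,\mu)$ of the function (\ref{mkr}), established in Lemma~\ref{le:skb} via (\ref{syk}) rather than $q$-Vandermonde. So your plan is sound, but expect the reassembly via (\ref{yuk}) to happen one level deeper than you sketched, inside the proof of the scalar identity rather than in the operator-level induction step.
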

We present a proof in Appendix.

\subsection{Explicit formula}
One can take the infinite sum in (\ref{mdk}) and write down 
an explicit formula of $Z_{\alpha_1,\ldots, \alpha_{n}}(\zeta)$
in terms of a product of  
$(\zeta^{-1}Q)_\infty ^{\pm 1}$ with various monomials 
$Q \in \mathcal{B}^{\otimes n(n-1)/2}$. 
Note that (\ref{mdk}) tells the simple dependence on $\alpha$ as
\begin{align}\label{yry}
Z_{\alpha_1,\ldots, \alpha_{n}}(\zeta) 
= Z_{0^{n}}(\zeta)
\bigl(1^{\otimes \frac{1}{2}(n-1)(n-2)}
\otimes K_{\alpha_1,\ldots, \alpha_{n}}\bigr).
\end{align} 
Here and in what follows, $K_{\alpha_1,\ldots, \alpha_{n}}$ 
is to be understood as the one in (\ref{K}).
Thus our task is reduced to the calculation of the special case of the sum (\ref{mdk}):
\begin{align}\label{zgz}
Z_{0^n}(\zeta) = \sum_{l_1,\ldots, l_{n-1} \in \Z_{\ge0}}
g_{l_1,\ldots, l_{n-1}}(\zeta)Z_{l_1, \ldots, l_{n-1}}(\zeta) \otimes
\bb^{l_1} \otimes \cdots \otimes \bb^{l_{n-1}}.
\end{align}
Let us illustrate it for $n=2$ and $3$.
For $n=2$, one has
\begin{align*}
Z_{0,0}(\zeta) &= \sum_{l_1\ge 0}
\frac{(\zeta)_{l_1}\zeta^{-l_1}}{(q)_{l_1}}
\bb^{l_1} =
\frac{(\bb)_\infty}{(\zeta^{-1}\bb)_\infty},\quad
Z_{\alpha_1, \alpha_2}(\zeta) = Z_{0,0}(\zeta) K_{\alpha_1,\alpha_2}= 
\frac{(\bb)_\infty}{(\zeta^{-1}\bb)_\infty}\bk^{\alpha_2} \bc^{\alpha_1}
\in \mathcal{B}
\end{align*}
by means of the formula
\begin{align*}
\frac{(zw)_\infty}{(z)_\infty}
= \sum_{j \ge 0}\frac{(w)_j}{(q)_j}z^j.
\end{align*}
This result agrees with \cite[eq.(39)]{KO}.
For $n=3$, the sum (\ref{zgz}) is calculated by using 
$(\zeta)_{l_1+l_2}  = (\zeta)_{l_2}(q^{l_2}\zeta)_{l_1}$ as
\begin{align}
Z_{0,0,0}(\zeta)&=
\sum_{l_1,l_2}\frac{\zeta^{-l_1-l_2}(\zeta)_{l_1+l_2}}
{(q)_{l_1}(q)_{l_2}}
\frac{(\bb)_\infty}{(\zeta^{-1}\bb)_\infty}\bk^{l_2} \bc^{l_1}
\otimes \bb^{l_1} \otimes \bb^{l_2}
\nonumber\\
&=\frac{(\bb \otimes 1 \otimes 1)_\infty}{(\zeta^{-1}\bb \otimes 1 \otimes 1)_\infty}
\sum_{l_2} \frac{\zeta^{-l_2}(\zeta)_{l_2}(\bk \otimes 1 \otimes \bb)^{l_2}}{(q)_{l_2}}
\sum_{l_1} \frac{\zeta^{-l_1}(q^{l_2}\zeta)_{l_1}(\bc \otimes \bb \otimes 1)^{l_1}}{(q)_{l_1}}
\nonumber\\
&=\frac{(\bb \otimes 1 \otimes 1)_\infty}{(\zeta^{-1}\bb \otimes 1 \otimes 1)_\infty}
\sum_{l_2} \frac{\zeta^{-l_2}(\zeta)_{l_2}(\bk \otimes 1 \otimes \bb)^{l_2}}{(q)_{l_2}}
\frac{(q^{l_2}\bc \otimes \bb \otimes 1)_\infty}{(\zeta^{-1}\bc \otimes \bb \otimes 1)_\infty}
\nonumber\\
&=\frac{(\bb \otimes 1 \otimes 1)_\infty}{(\zeta^{-1}\bb \otimes 1 \otimes 1)_\infty}
(\bc \otimes \bb \otimes 1)_\infty
\sum_{l_2} \frac{\zeta^{-l_2}(\zeta)_{l_2}(\bk \otimes 1 \otimes \bb)^{l_2}}{(q)_{l_2}}
\frac{1}{(\zeta^{-1}\bc \otimes \bb \otimes 1)_\infty}
\nonumber\\
&=\frac{(\bb \otimes 1 \otimes 1)_\infty}{(\zeta^{-1}\bb \otimes 1 \otimes 1)_\infty}
(\bc \otimes \bb \otimes 1)_\infty
\frac{(\bk \otimes 1 \otimes \bb)_\infty}{(\zeta^{-1}\bk \otimes 1 \otimes \bb)_\infty}
\frac{1}{(\zeta^{-1}\bc \otimes \bb \otimes 1)_\infty},
\nonumber\\
Z_{\alpha_1,\alpha_2, \alpha_3}(\zeta) &= 
Z_{0,0,0}(\zeta) (1 \otimes K_{\alpha_1,\alpha_2,\alpha_3}) 
= Z_{0,0,0}(\zeta)
(1 \otimes \bk^{\alpha_2+\alpha_3}\bc^{\alpha_1} \otimes 
\bk^{\alpha_3} \bc^{\alpha_2}).
\label{nihon}
\end{align}
These results on $Z_{0,\ldots,0}(\zeta)$ 
are neatly presented as 
\begin{equation}\label{knh}
\begin{split}
Z_{0,0}(\zeta) &= V_1(1)V_1(\zeta)^{-1},\quad
V_1(\zeta) = (\zeta^{-1}\bb)_\infty,\\
Z_{0,0,0}(\zeta) &= 
\bigl(Z_{0,0}(\zeta)\otimes 1\otimes 1\bigr)V_2(1)V_2(\zeta)^{-1},\quad
V_2(\zeta) = (\zeta^{-1}\bc \otimes \bb \otimes 1)_\infty
(\zeta^{-1}\bk \otimes 1 \otimes \bb)_\infty.
\end{split}
\end{equation}

Let us proceed to general $n \,(\ge 2)$ case. 
Substitution of $(\ref{yry})|_{n\rightarrow n-1}$ 
into the RHS of (\ref{zgz}) gives
\begin{align}
Z_{0^{n}}(\zeta) 
&= \bigl(Z_{0^{n-1}}(\zeta) \otimes 1^{\otimes n-1}\bigr)Y_n(\zeta),
\nonumber\\
Y_n(\zeta)&=
\sum_{l_1,\ldots, l_{n-1} \in \Z_{\ge 0}}
g_{l_1,\ldots, l_{n-1}}(\zeta)\,
1^{\otimes \frac{1}{2}(n-2)(n-3)}\otimes K_{l_1,\ldots, l_{n-1}}
\otimes \bb^{l_1} \otimes \cdots \otimes \bb^{l_{n-1}}.
\label{ain}
\end{align}
To systematize the calculation we introduce copies 
$\mathcal{B}_{i,j} = \langle 1, \bb_{i,j}, \bc_{i,j}, \bk_{i,j} \rangle$ 
of the $q$-boson algebras and the generators
for $1 \le i \le j \le n-1$ 
obeying (\ref{akn}) within each 
$\mathcal{B}_{i,j}$ and $[\mathcal{B}_{i,j}, \mathcal{B}_{i',j'}]=0$
if $(i,j) \neq (i',j')$.
We take them so that 
$Z_{\alpha_1,\ldots, \alpha_{n}}(\zeta) \in 
\bigotimes_{1 \le i \le j \le n-1}\mathcal{B}_{i,j}$  and (\ref{ain}) reads
\begin{align*}
Y_n(\zeta)= 
\sum_{l_1,\ldots, l_{n-1} \in \Z_{\ge 0}}
g_{l_1,\ldots, l_{n-1}}(\zeta)\,
\bigl(\bk_{1,n-2}^{l^+_1}\bc_{1,n-2}^{l_1} \cdots
\bk_{n-2,n-2}^{l^+_{n-2}}\bc_{n-2,n-2}^{l_{n-2}}\bigr)
\bigl(\bb_{1,n-1}^{l_1}\cdots \bb_{n-1,n-1}^{l_{n-1}}\bigr),
\end{align*}
where $l^+_j = l_{j+1}+ \cdots + l_{n-1}$.
This corresponds to labeling the components in the tensor product 
$\mathcal{B}^{\otimes n(n-1)/2}$ as 
\begin{align}\label{hsi}
(1,1), (1,2), (2,2),  (1,3), (2,3), (3,3), \ldots \ldots, 
(1,n-1),(2,n-1), \ldots, (n-1,n-1).
\end{align}
As exemplified in the above formula of $Y_n(\zeta)$, using 
the $q$-bosons $\mathcal{B}_{i,j}$ with indices allow us to avoid  
the cumbersome factor 
$1^{\otimes N}$ as in (\ref{ain}).

One can rearrange the summand in $Y_n(\zeta)$ by 
reordering the commuting generators only as
\begin{equation}\label{cir}
\begin{split}
Y_n(\zeta)&= 
\sum_{l=(l_1,\ldots, l_{n-1}) \in \Z^{n-1}_{\ge 0}}
\frac{\zeta^{-|l|}(\zeta)_{|l|}}{\prod_{1 \le i \le n-1}(q)_{l_i}}
A_{n-1,n-1}^{l_{n-1}}A_{n-2,n-1}^{l_{n-2}}\cdots A_{1,n-1}^{l_1},
\\
A_{j,n-1}&= \bk_{1,n-2}\bk_{2,n-2}\cdots \bk_{j-1,n-2}\bc_{j,n-2}\bb_{j,n-1}
\quad (\bc_{n-1,n-2}=1).
\end{split}
\end{equation}
In particular $A_{1,n-1}=\bc_{1,n-2}\bb_{1,n-1}$ and 
$A_{n-1,n-1} = \bk_{1,n-2} \cdots \bk_{n-2,n-2}\bb_{n-1,n-1}$.
By utilizing the decomposition $(\zeta)_{|l|} = (\zeta)_{l_1+\cdots+ l_{n-2}}
(q^{l_1+\cdots+ l_{n-2}}\zeta)_{l_{n-1}}$,  the sum over $l_{n-1}$ is taken, 
leading to
\begin{align*}
Y_n(\zeta)&= 
\sum_{l_1,\ldots, l_{n-2} \in \Z_{\ge 0}}\!\!\!\!\!\!
\frac{\zeta^{-l_1-\cdots - l_{n-2}}(\zeta)_{l_1+\cdots + l_{n-2}}}
{\prod_{1 \le i \le n-2}(q)_{l_i}}
\frac{(q^{l_1+\cdots + l_{n-2}}A_{n-1,n-1})_\infty}
{(\zeta^{-1}A_{n-1,n-1})_\infty}
A_{n-2,n-1}^{l_{n-2}}\cdots A_{1,n-1}^{l_1}\\
&=\frac{1}{(\zeta^{-1}A_{n-1,n-1})_\infty}
\sum_{l_1,\ldots, l_{n-2} \in \Z_{\ge 0}}\!\!\!\!\!\!
\frac{\zeta^{-l_1-\cdots - l_{n-2}}{(\zeta)_{l_1+\cdots + l_{n-2}}}}
{\prod_{1 \le i \le n-2}(q)_{l_i}}
A_{n-2,n-1}^{l_{n-2}}\cdots A_{1,n-1}^{l_1}(A_{n-1,n-1})_\infty,
\end{align*}
where the second step is due to 
$A_{n-1,n-1}A_{j,n-1} = q^{-1}A_{j,n-1}A_{n-1,n-1}\, (1 \le j \le n-2)$.
Now the sum over $l_{n-2}$ can be taken in the same manner.
Repeating this process we arrive at
\begin{align}
Y_n(\zeta)&= V_{n-1}(\zeta)^{-1}V_{n-1}(1) 
= V_{n-1}(1)V_{n-1}(\zeta)^{-1},
\label{vev}\\
V_{n-1}(\zeta) &= (\zeta^{-1}A_{1,n-1})_\infty (\zeta^{-1}A_{2,n-1})_\infty
\cdots  (\zeta^{-1}A_{n-1,n-1})_\infty,
\label{vdef}
\end{align}
where the rightmost expression in (\ref{vev}) follows from a 
similar calculation taking the sum (\ref{cir}) 
in the order $l_1, l_2, \ldots, l_{n-1}$ 
applying the decomposition 
$(\zeta)_{|l|} = (\zeta)_{l_2+\cdots+ l_{n-1}}
(q^{l_2+\cdots+ l_{n-1}}\zeta)_{l_1}$ first.

The explicit formulas derived in this way supplement the 
recursive characterization in Theorem \ref{th:main}.
They constitute the latter half of the main result of the paper.
We summarize them in 
\begin{theorem}\label{th:nzm}
The ZF algebra $(\ref{mrn})$ has the following
representation in $\bigotimes_{1 \le i \le j \le n-1}\mathcal{B}_{i,j}$:
\begin{align*}
Z_{\alpha_1,\ldots, \alpha_{n}}(\zeta)
&= Z_{0^{n}}(\zeta)\,
\bk_{1,n-1}^{\alpha^+_1}\bc_{1,n-1}^{\alpha_1}
\cdots \bk_{n-1,n-1}^{\alpha^+_{n-1}}\bc_{n-1,n-1}^{\alpha_{n-1}}\qquad
(\alpha^+_i = \alpha_{i+1}+\cdots + \alpha_{n}),\\
Z_{0^{n}}(\zeta)&= Y_2(\zeta)Y_3(\zeta) \cdots Y_{n}(\zeta),\\
Y_j(\zeta) &= V_{j-1}(1)V_{j-1}(\zeta)^{-1} 
= V_{j-1}(\zeta)^{-1}V_{j-1}(1),\\
V_j(\zeta) &= (\zeta^{-1}A_{1,j})_\infty
(\zeta^{-1}A_{2,j})_\infty \cdots (\zeta^{-1}A_{j,j})_\infty,\\
A_{i,j}&= \bk_{1,j-1}\bk_{2,j-1}\cdots \bk_{i-1,j-1}\bc_{i,j-1}\bb_{i,j}
\quad (\bc_{j,j-1}=1).
\end{align*} 
\end{theorem}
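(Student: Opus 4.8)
The plan is to show that Theorem \ref{th:nzm} is \emph{not an independent claim} but simply the closed-form evaluation of the recursion in (\ref{ini})--(\ref{mdk}), so that it becomes a corollary of Theorem \ref{th:main} together with the summation manipulations already sketched in the paragraph preceding its statement. Concretely, the only thing left to verify is that the sum defining $Y_n(\zeta)$ in (\ref{ain}), once rewritten in the indexed $q$-boson copies $\mathcal{B}_{i,j}$, equals the product $V_{n-1}(1)V_{n-1}(\zeta)^{-1}$ of (\ref{vev})--(\ref{vdef}); combining this with $Z_{0^n}(\zeta)=(Z_{0^{n-1}}(\zeta)\otimes 1^{\otimes n-1})Y_n(\zeta)$ and iterating gives $Z_{0^n}(\zeta)=Y_2(\zeta)\cdots Y_n(\zeta)$, and then (\ref{yry}) supplies the dependence on $\alpha$ via $K_{\alpha_1,\ldots,\alpha_n}$. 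Since $Z_\alpha(\zeta)$ from the recursion already satisfies (\ref{mrn}) by Theorem \ref{th:main}, nothing about the ZF relations needs to be re-checked.

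First I would record the two commutation facts that drive the whole computation. One: within a fixed stage $n$, the operators $A_{j,n-1}$ of (\ref{cir}) pairwise $q$-commute in the ordered sense $A_{k,n-1}A_{j,n-1}=q^{-1}A_{j,n-1}A_{k,n-1}$ for $j<k$; this follows from the $\mathcal{B}$-relations (\ref{akn}) (only the $\bk\bb=q\bb\bk$, $\bk\bc=q^{-1}\bk\bc$ pieces matter, since the $\bc$'s and $\bb$'s live in distinct, mutually commuting copies) and is exactly the hypothesis needed to push a factor $q^{(\,\cdot\,)}$ past a $q$-Pochhammer symbol. Two: the identity $(\zeta)_{|l|}=(\zeta)_{l_1+\cdots+l_{n-2}}(q^{l_1+\cdots+l_{n-2}}\zeta)_{l_{n-1}}$ lets me peel off the $l_{n-1}$-dependence. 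Then I would run the induction on the number of summation variables exactly as displayed: sum over $l_{n-1}$ using $\sum_j \frac{(w)_j}{(q)_j}z^j=\frac{(zw)_\infty}{(z)_\infty}$ with $z=\zeta^{-1}A_{n-1,n-1}$, $w=q^{l_1+\cdots+l_{n-2}}\zeta$, then slide the emerging $(A_{n-1,n-1})_\infty$ to the right across the remaining $A_{j,n-1}^{l_j}$ using the $q$-commutation (which converts $q^{l_1+\cdots+l_{n-2}}A_{n-1,n-1}$ into $A_{n-1,n-1}$), and repeat for $l_{n-2},\ldots,l_1$. Each round contributes one factor $(\zeta^{-1}A_{j,n-1})_\infty^{-1}$ on the left and $(A_{j,n-1})_\infty$ on the right, assembling into $V_{n-1}(\zeta)^{-1}V_{n-1}(1)$; doing the sum in the opposite order $l_1,l_2,\ldots$ with the mirror decomposition of $(\zeta)_{|l|}$ gives $V_{n-1}(1)V_{n-1}(\zeta)^{-1}$, and the two must agree.

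I would expect the bookkeeping of the indexed $q$-bosons to be the main obstacle rather than any deep point: one has to check that relabeling the tensor slots according to (\ref{hsi}), in particular identifying the slots appearing in $Z_{0^{n-1}}(\zeta)$ (living in $\bigotimes_{1\le i\le j\le n-2}\mathcal{B}_{i,j}$) with the first $\tfrac12(n-1)(n-2)$ slots and reading the new $n-1$ slots $\mathcal{B}_{1,n-1},\ldots,\mathcal{B}_{n-1,n-1}$ off correctly, is consistent with how $A_{i,j}$ and $V_j$ are defined for \emph{all} $j$ simultaneously, so that the telescoping product $Y_2\cdots Y_n$ makes literal sense in the single algebra $\bigotimes_{1\le i\le j\le n-1}\mathcal{B}_{i,j}$. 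In particular I must confirm that the $\bk_{1,n-2},\ldots$ appearing inside $A_{j,n-1}$ are the same generators as those produced by the $K_{l_1,\ldots,l_{n-1}}$ factor in (\ref{ain}) after the substitution of $(\ref{yry})|_{n\to n-1}$ — i.e. that $K_{l_1,\ldots,l_{n-1}}$ deposits $\bk^{l^+_i}\bc^{l_i}$ precisely in slot $(i,n-2)$ — and that the creation operators $\bb^{l_i}$ land in slot $(i,n-1)$. Once this indexing is nailed down the remaining steps are the routine $q$-series manipulations above, and the base cases $n=2,3$ in (\ref{knh}) serve as a consistency check on the final formulas for $V_{j}$ and $A_{i,j}$.
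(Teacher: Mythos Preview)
Your proposal is correct and follows essentially the same approach as the paper: the derivation in the text preceding Theorem \ref{th:nzm} already establishes (\ref{yry}), the recursion $Z_{0^n}(\zeta)=(Z_{0^{n-1}}(\zeta)\otimes 1^{\otimes n-1})Y_n(\zeta)$, and the evaluation (\ref{vev})--(\ref{vdef}) of $Y_n(\zeta)$ via exactly the $q$-commutation $A_{k,n-1}A_{j,n-1}=q^{-1}A_{j,n-1}A_{k,n-1}$ and the peel-off of the $l_j$ one at a time that you describe, so that Theorem \ref{th:nzm} is indeed just the closed form of the operators whose ZF property is Theorem \ref{th:main}. Your identification of the indexing bookkeeping in (\ref{hsi}) as the only delicate point is also in line with the paper's presentation.
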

The cases $n= 2,3$ reproduce (\ref{knh})
under the identification 
${\bf x}_{1,1}={\bf x} \otimes 1 \otimes 1,
{\bf x}_{1,2} = 1 \otimes {\bf x} \otimes 1,
{\bf x}_{2,2} = 1 \otimes 1 \otimes {\bf x}$
in accordance with (\ref{hsi}).

\begin{remark}
An interesting corollary of the ZF algebra (\ref{sxx}) 
and $\mathscr{S}(\lambda, \mu)_{\gamma,\delta}^{0,0}
= \theta(\gamma=\delta = 0)$ is the commutativity:
\begin{align*}
[X_0(\mu) , X_0(\lambda)]=0,\qquad
[Z_0(\mu) , Z_0(\lambda)]=0.
\end{align*}
In addition to it, we have
\begin{align*}
[V_m(\mu), V_m(\lambda)]=0\qquad (1 \le m \le n-1).
\end{align*}
To see this, note that the transformation
$(\bb_{i,j}, \bc_{i,j}, \bk_{i,j}) \rightarrow 
(\eta_{i,j} \bb_{i,j}, \eta_{i,j}^{-1}\bc_{i,j}, \bk_{i,j})$ is an automorphism of the 
$q$-boson algebra $\bigotimes_{1 \le i \le j \le m} \mathcal{B}_{i,j}$ 
for any $m$ and $\eta_{i,j}\neq 0$.
Choosing $\eta_{i,j} = \eta^{\theta(j=m)}$ leads to 
$A_{i,m} \rightarrow \eta A_{i,m}$ hence 
$V_m(\zeta) \rightarrow V_m(\zeta/\eta)$.
Therefore the above commutativity follows by applying 
this automorphism to the equality 
$V_{m}(1)V_{m}(\zeta)^{-1} 
= V_{m}(\zeta)^{-1}V_{m}(1)$ in Theorem \ref{th:nzm}.
\end{remark}

Before closing the section, let us explain the relation to the 
work \cite{KMO} where an inhomogeneous generalization of an
$n$-species totally asymmetric zero range process was introduced and
a matrix product formula of the stationary states was obtained.
Let $X^{(n)}_{\alpha_1,\ldots, \alpha_n}$ be the 
homogeneous case $w_1=\cdots = w_n=1$ 
of the matrix product operator defined from the initial condition
$X^{(1)}_{\alpha_1} = 1$ recursively by \cite[eq.(3.4)]{KMO}, i.e.,
\begin{align}\label{kbk}
X^{(n)}_{\alpha_1,\ldots, \alpha_n} = 
\sum_{l_1,\ldots, l_{n-1} \in \Z_{\ge 0}}
X^{(n-1)}_{l_1,\ldots, l_{n-1}} \otimes
\bb^{l_1}\bk^{\alpha^+_1}\bc^{\alpha_1}\otimes 
\cdots \otimes
\bb^{l_{n-1}}\bk^{\alpha^+_{n-1}}\bc^{\alpha_{n-1}}
\end{align}
for $n\ge 2$, 
where $\alpha^+_i$ is given by (\ref{K}). 
The operators $\bb, \bc, \bk$ here 
are regarded as {\em representations} (\ref{yrk}) of 
$q$-boson generators at $q=0$, which are 
given by ${\bf a}^+, {\bf a}^-$ and $\bk$ in \cite[eq.(2.3)]{KMO2},
respectively.
Let us consider an automorphism of the $q$-boson algebra given by the replacement
\begin{equation}\label{utk}
\bb_{i,j} \rightarrow \zeta^{j-i+1} \bb_{i,j},
\quad
\bc_{i,j} \rightarrow \zeta^{i-j-1} \bc_{i,j}
\quad (1 \le i \le j \le n-1).
\end{equation}
We claim that (\ref{kbk}) is reproduced from the 
corresponding representation of 
$Z_{\alpha_1,\ldots, \alpha_n}(\zeta)$ in this paper by
\begin{align}\label{ymi}
X^{(n)}_{\alpha_1,\ldots, \alpha_n} = 
\lim_{\zeta, q \rightarrow 0}
\zeta^{(n-1)\alpha_1+\cdots + 2\alpha_{n-2}+\alpha_{n-1}}
Z_{\alpha_1,\ldots, \alpha_n}(\zeta)|_{(\ref{utk})}.
\end{align}

To see (\ref{ymi}), note that it holds as $1=1$ for $n=1$.
Moreover the recursion $(\ref{mdk})$ 
is equivalently presented as
\begin{align*}
&\zeta^{(n-1)\alpha_1+\cdots + 2\alpha_{n-2}+\alpha_{n-1}}
Z_{\alpha_1,\ldots, \alpha_n}(\zeta)|_{(\ref{utk})}
=\sum_{l_1,\ldots, l_{n-1} \in \Z_{\ge 0}}
\frac{(\zeta)_{l_1+\cdots + l_{n-1}}}{\prod_{1 \le i \le n-1}(q)_{l_i}}\\
&\quad \times 
\zeta^{(n-2)l_1+\cdots + 2l_{n-3}+l_{n-2}}
Z_{l_1,\ldots, l_{n-1}}(\zeta)|_{(\ref{utk})_{n\rightarrow n-1}}
\otimes \bb^{l_1} \bk^{\alpha^+_1}\bc^{\alpha_1}
\otimes \cdots 
\otimes \bb^{l_{n-1}} \bk^{\alpha^+_{n-1}}\bc^{\alpha_{n-1}}.
\end{align*}
The point here is that $\zeta^{-l_1-\cdots - l_{n-1}}$ that was 
contained in the coefficient in $(\ref{mdk})$ via (\ref{aim})
has been absorbed away into $q$-bosons.
Now the limits $q,\zeta \rightarrow 0$ can be smoothly taken 
reducing the above relation to (\ref{kbk}).

\section{Application to $U_q(A^{(1)}_n)$-zero range process}\label{sec:zrp}
\subsection{\mathversion{bold}$U_q(A^{(1)}_n)$-zero range process}
\label{ss:sae}
Let us briefly recall the discrete time inhomogeneous 
$U_q(A^{(1)}_n)$-zero range process.
Among a few versions of the models introduced in \cite{KMMO}, 
it corresponds to the discrete time inhomogeneous 
one described in Section 3.3 therein.
As we will remark after Theorem \ref{th:sin},
it covers the continuous time version mentioned in (\ref{dik}). 

Let $L$ be a positive integer.
Introduce the operator
\begin{align}\label{ngm}
T(\lambda|\mu_1,\ldots, \mu_L) = 
\mathrm{Tr}_{W}\left(
\mathscr{S}_{0,L}(\lambda,\mu_L)\cdots \mathscr{S}_{0,1}(\lambda,\mu_1)
\right)
\in \mathrm{End}(W^{\otimes L}).
\end{align}
In the terminology of the quantum inverse scattering method,
it is the row transfer matrix of the $U_q(A^{(1)}_n)$ vertex model 
of length $L$ with periodic boundary condition 
whose quantum space is  
$W^{\otimes L}$ with inhomogeneity parameters 
$\mu_1, \ldots, \mu_L$ and the auxiliary space $W$  
carrying a parameter $\lambda$.
If these spaces are labeled as $W_1\otimes \cdots \otimes W_L$ and $W_0$, 
the stochastic $R$ matrix $\mathscr{S}_{0,i}(\lambda, \mu_i)$  
acts as $\mathscr{S}(\lambda, \mu_i)$ 
on $W_0 \otimes W_i$ and as the identity elsewhere.
Owing to (\ref{sybe}) and (\ref{sinv}),
the matrix (\ref{ngm}) forms a commuting family (cf. \cite{Bax}):
\begin{align}\label{mri}
[T(\lambda|\mu_1,\ldots, \mu_L), 
T(\lambda'|\mu_1,\ldots, \mu_L)]=0.
\end{align}
We write the vector 
$|\alpha_1\rangle  \otimes \cdots \otimes |\alpha_L\rangle  \in W^{\otimes L}$
representing a state of the system 
as $|\alpha_1,\ldots, \alpha_L\rangle$ and the action of 
$T=T(\lambda|\mu_1,\ldots, \mu_L) $ as
\begin{align*}
T|\beta_1,\ldots, \beta_L\rangle 
= \sum_{\alpha_1,\ldots, \alpha_L \in \Z_{\ge 0}^n}
 T_{\beta_1,\ldots, \beta_L}^{\alpha_1,\ldots, \alpha_L}
|\alpha_1,\ldots, \alpha_L\rangle 
\in W^{\otimes L}.
\end{align*}
Then the matrix element is depicted by the concatenation of (\ref{vertex}) as 
\begin{equation}\label{tdiag}
\begin{picture}(250,50)(10,-25)
\put(-20,0){$T_{\beta_1,\ldots, \beta_L}^{\alpha_1,\ldots, \alpha_L}=
{\displaystyle \sum_{\gamma_1,\ldots, \gamma_L \in \Z_{\ge 0}^n}}$}

\put(100,0){
\put(0,0){\vector(1,0){24}}
\put(12,-12){\vector(0,1){24}}
\put(-12,-2){$\gamma_L$}\put(28,-2){$\gamma_1$}
\put(9,-22){$\beta_1$}\put(8,16){$\alpha_1$}}

\put(140,0){
\put(0,0){\vector(1,0){24}}
\put(12,-12){\vector(0,1){24}}
\put(27,-2){$\gamma_2$}
\put(9,-22){$\beta_2$}\put(8,16){$\alpha_2$}}

\put(182,-3){$\cdots$}

\put(220,0){
\put(0,0){\vector(1,0){24}}
\put(12,-12){\vector(0,1){24}}
\put(-24,-2){$\gamma_{L-1}$}\put(27,-2){$\gamma_L$,}
\put(9,-22){$\beta_L$}\put(8,16){$\alpha_L$}}
\end{picture}
\end{equation}
where the summand means 
$\prod_{i=1}^L \mathscr{S}(\lambda, \mu_i)_{\gamma_{i-1},\beta_i}^{\gamma_i, \alpha_i}$
with $\gamma_0=\gamma_L$.
By the construction it satisfies the weight conservation, i.e., 
$T_{\beta_1,\ldots, \beta_L}^{\alpha_1,\ldots, \alpha_L} = 0$
unless
$\alpha_1+\cdots +\alpha_L = 
\beta_1+\cdots + \beta_L \in \Z_{\ge 0}^{n}$.

Let $t$ be a time variable and consider the evolution equation 
\begin{align}\label{dmt}
|P(t+1)\rangle = T(\lambda|\mu_1,\ldots, \mu_L)
|P(t)\rangle \in W^{\otimes L}.
\end{align}
Although this is an equation in an infinite-dimensional vector space,
it splits into finite-dimensional subspaces which we call {\em sectors}
due to the weight conservation property mentioned in the above.
For an array $m=(m_1,\ldots, m_n) \in \Z^n_{\ge 0}$ 
and the set 
$S(m) = \{(\sigma_1,\ldots, \sigma_L) \in (\Z^n_{\ge 0})^L\mid 
\sigma_1+\cdots + \sigma_L = m\}$,
the corresponding sector, which will also be referred to as $m$, is given by 
$\bigoplus_{(\sigma_1,\ldots, \sigma_L)\in S(m)}
\C |\sigma_1,\ldots, \sigma_L\rangle$.
We interpret a vector 
$|\sigma_1,\ldots, \sigma_L\rangle \in W^{\otimes L}$ 
with $\sigma_i=(\sigma_{i,1},\ldots, \sigma_{i,n}) \in \Z_{\ge 0}^n$ as 
a state in which the $i$ th site from the left 
is populated with $\sigma_{i,a}$ particles of the $a$ th species.  
Thus $m=(m_1,\ldots, m_n)$ means that 
there are $m_a$ particles of species $a$ in total in the corresponding sector.

In order to interpret (\ref{dmt}) as the master 
equation of a discrete time Markov process, the matrix 
$T=T(\lambda|\mu_1,\ldots, \mu_L) $ should fulfill the conditions
(i) non-negativity; all the elements (\ref{tdiag}) belong to $\R_{\ge 0}$
and 
(ii) sum-to-unity property; $\sum_{\alpha_1,\ldots, \alpha_L\in \Z_{\ge 0}^{n}}
T_{\beta_1,\ldots, \beta_L}^{\alpha_1,\ldots, \alpha_L} = 1$ for any
$(\beta_1,\ldots, \beta_L) \in (\Z_{\ge 0}^n)^L$.

The property  (i)  holds if 
$\Phi_q(\gamma|\beta; \lambda,\mu_i)\ge 0$
for all $i \in \Z_L$.
This is achieved 
by taking $0 < \mu^{\epsilon}_i < \lambda ^{\epsilon} < 1, 0< q^{\epsilon}<1$
in the either alternative $\epsilon=\pm 1$.
The property (ii) means the total probability conservation and can be 
shown by using (\ref{syk}) as in \cite[Sec.3.2]{KMMO}.

We call $T(\lambda|\mu_1,\ldots, \mu_L) $
{\em Markov transfer matrix} assuming 
$0 < \mu_i < \lambda < 1,0< q<1$.
The equation (\ref{dmt}) 
represents a stochastic dynamics of $n$-species of particles
hopping to the right periodically via an extra lane (horizontal arrows in (\ref{tdiag}))
which particles get on or get off when they leave or arrive at a site.
The rate of these local processes is specified by 
(\ref{ask2}), (\ref{mho}) and (\ref{vertex}).
For $n=1$ and the homogeneous choice $\mu_1=\cdots= \mu_L$, it reduces to 
the model introduced in \cite{P}.

From the homogeneous case $\mu_1 = \cdots = \mu_L= \mu$ 
of the Markov transfer matrix 
$T(\lambda|\mu_1,\ldots, \mu_L)$ (\ref{ngm}), 
one can deduce the continuous time $U_q(A^{(1)}_n)$-zero range process
by a derivative with respect to $\lambda$ at appropriate points
\cite[Sec.3.4]{KMMO}.
The resulting Markov matrix $H$ 
in the master equation $\frac{d}{dt}|P(t)\rangle =H|P(t)\rangle$
consists of pairwise interaction terms
as $H= \sum_{i \in \Z_L}h_{i,i+1}$
where $h_{i,i+1}$ acts on the  
$(i,i+1)$ th sites as $h$ and as the identity elsewhere.
The local Markov matrix $h$ is the 
$\epsilon=1$ case of \cite[Rem.9]{KMMO}, which reads as
\begin{equation}\label{dik}
\begin{split}
&h(|\alpha\rangle \otimes | \beta\rangle)
= a \sum_{0< \gamma \le \alpha}
\frac{q^{\varphi(\alpha-\gamma,\gamma)}
\mu^{|\gamma|-1}(q)_{|\gamma|-1}}
{(\mu q^{|\alpha|-|\gamma|};q)_{|\gamma|}}
\prod_{i=1}^n
\binom{\alpha_i}{\gamma_i}_{\!q}
|\alpha-\gamma\rangle \otimes | \beta+\gamma\rangle\\
&+ b \sum_{0<\gamma \le \beta}
\frac{q^{\varphi(\gamma,\beta-\gamma)}
(q)_{|\gamma|-1}}
{(\mu q^{|\beta|-|\gamma|};q)_{|\gamma|}}
\prod_{i=1}^n
\binom{\beta_i}{\gamma_i}_{\!q}
|\alpha+\gamma\rangle \otimes | \beta-\gamma\rangle
-\left(
\sum_{i=0}^{|\alpha|-1}\frac{aq^i}{1-\mu q^i}
+\sum_{i=0}^{|\beta|-1}\frac{b}{1-\mu q^i}\right)
|\alpha\rangle \otimes | \beta\rangle,
\end{split}
\end{equation}
where the constraint $\gamma>0$ for $\gamma \in \Z^n_{\ge 0}$
is equivalent to $|\gamma|\ge 1$.
The parameters $a, b$ are arbitrary as long as $a,b \in \R_{\ge 0}$
since the contributions proportional to them are commuting. 
See \cite[eq.(60)]{KMMO}.

\subsection{Stationary states}\label{sec:hnka}
By definition a stationary state of the discrete time 
$U_q(A^{(1)}_n)$-zero range process  (\ref{dmt})
is a vector $|\overline{P}\rangle \in W^{\otimes L}$ 
such that
\begin{align*}
|\overline{P}\rangle= T(\lambda|\mu_1,\ldots, \mu_L)|\overline{P}\rangle.
\end{align*}
The stationary state is unique in each sector $m$, which we denote 
by $|\overline{P}(m)\rangle$.
Apart from $m$, 
it depends on $q$ and the inhomogeneity parameters $\mu_1, \ldots, \mu_L$ but
{\em not} on $\lambda$ thanks to the commutativity (\ref{mri}).
Sectors $m=(m_1,\ldots, m_n)$ such that $\forall m_a \ge 1$ are called {\em basic}.
Non-basic sectors are equivalent to a basic sector of some $n'<n$ models
with a suitable relabeling of the species.
Henceforth we concentrate on the basic sectors. 
The coefficient appearing in the expansion
\begin{align*}
|\overline{P}(m)\rangle = \sum_{(\sigma_1,\ldots, \sigma_L) \in S(m)}
{\mathbb P}(\sigma_1,\ldots, \sigma_L)
|\sigma_1,\ldots, \sigma_L\rangle
\end{align*}
is the stationary probability if it is properly normalized as
$\sum_{(\sigma_1,\ldots, \sigma_L) \in S(m)}
{\mathbb P}(\sigma_1,\ldots, \sigma_L) = 1$.
In this paper unnormalized ones will also be refereed to as
stationary probabilities by abuse of terminology.

If the dependence on the inhomogeneity parameters are exhibited as 
${\mathbb P}(\sigma_1,\ldots, \sigma_L; \mu_1, \ldots, \mu_L)$,
we have the cyclic symmetry 
${\mathbb P}(\sigma_1,\ldots, \sigma_L; \mu_1, \ldots, \mu_L)
={\mathbb P}(\sigma_L,\sigma_1, \ldots, \sigma_{L-1}; 
\mu_L, \mu_1, \ldots, \mu_{L-1})$ by the construction.
Examples of stationary states for 
$U_q(A^{(1)}_2)$-zero range process have been given in \cite{KMMO, KO}.

\begin{example}\label{ex:ijm}
Consider $U_q(A^{(1)}_3)$-zero range process 
in the minimum sector $m=(1,1,1)$ and system size $L=2$, 
which is an 8 dimensional space.
For the homogeneous case $\mu_1=\mu_2=\mu$, the stationary state is given 
up to normalization by 
\begin{align*}
|\overline{P}(1,1,1)\rangle &= 2 (1 - \mu q^2)
\bigl(3+q - \mu(1 + 3q)\bigr) |\emptyset, 123\rangle \\
&+ 2 (1 - \mu)
\bigl(1 + q+2q^2 - \mu(2q + q^2 +q^3)\bigr) |3,12\rangle \\
&+ (1 - \mu)
(1 + 5 q + q^2 + q^3- \mu(1+q+ 5q^2 + q^3)\bigr) |2, 13\rangle\\ 
&+ (1 + q^2) (1 - \mu)\bigl(3+q - \mu(1 + 3q)\bigr) |23,1\rangle + \text{cyclic},
\end{align*}
where ``cyclic" means further four terms obtained by 
the change $|\sigma_1,\sigma_2\rangle \rightarrow |\sigma_2,\sigma_1\rangle$.
We have employed the multiset notation 
$|3, 12\rangle$ to mean $|(0,0,1), (1,1,0)\rangle$ etc.
In the inhomogeneous case, 
we have
\begin{align*}
\mathbb{P}(23,1)/\mathbb{P}(\emptyset,123) 
&=\frac{\mu_2^2 (1 - \mu_1) (1 - \mu_1 q) 
(\mu_2 - \mu_1 \mu_2 + \mu_1 q^2 - \mu_1 \mu_2 q^2)}
{\mu_1^2 (1 - \mu_2 q) (1 - \mu_2 q^2)(\mu_1 + \mu_2 - 2 \mu_1 \mu_2)},\\
\mathbb{P}(1,23)/\mathbb{P}(\emptyset,123) 
&=\frac{\mu_2 (1 - \mu_1) (\mu_1 - \mu_1 \mu_2 + \mu_2 q^2 - \mu_1 \mu_2 q^2)}
{\mu_1 (1 - \mu_2 q^2) (\mu_1 +\mu_2 - 2 \mu_1 \mu_2)}
\end{align*}
for example. The other ratios contain bulky factors.
We expect that there is a normalization such that 
all the stationary probabilities belong to $\Z_{\ge 0}[q, -\mu_1,\ldots, -\mu_n]$.
\end{example}

\subsection{Matrix product construction}
Let 
$F = \bigoplus_{m \ge 0}\C(q) |m\rangle$ be the Fock space and 
$F^\ast = \bigoplus_{m \ge 0}\C(q) \langle m |$ be its dual on which 
the $q$-boson operators $\bb, \bc, \bk$ act as
\begin{equation}\label{yrk}
\begin{split}
\bb | m \rangle &= |m+1\rangle,\qquad \bc | m \rangle = (1-q^m)|m-1\rangle,
\qquad \bk |m\rangle = q^m |m \rangle,\\
\langle m | \bc &= \langle m+1 |,\qquad
\langle m | \bb = \langle m-1|(1-q^m),\qquad
\langle m | \bk = \langle m | q^m,
\end{split}
\end{equation}
where $|\!-\!1\rangle = \langle -1 |=0$.
They satisfy the defining relations (\ref{akn}).
We specify the bilinear pairing of $F^\ast$ and $F$ as 
$\langle m | m'\rangle = \theta(m=m')(q)_m$.
Then $\langle m| (X|m'\rangle) = (\langle m|X)|m'\rangle$ holds and the 
trace is given by $\mathrm{Tr}(X) = \sum_{m \ge 0}
\frac{\langle m|X|m\rangle}{(q)_m}$.
As a vector space, the $q$-boson algebra $\mathcal{B}$ 
has the direct sum decomposition
$\mathcal{B} = \C(q) 1 \oplus \mathcal{B}_{\text{fin}}$,
where $\mathcal{B}_{\text{fin}} = 
\bigoplus_{r \ge 1}  (\mathcal{B}_+^r \oplus  \mathcal{B}_-^r  
\oplus \mathcal{B}_0^r)$ with
$\mathcal{B}^r_+ =\bigoplus_{s\ge 0} \C(q) \bk^s\bb^r,
\mathcal{B}^r_- =\bigoplus_{s\ge 0} \C(q) \bk^s\bc^r$ and  
$\mathcal{B}^r_0 =\C(q) \bk^r$.
The trace $\mathrm{Tr}(X)$ is convergent if 
$X \in \mathcal{B}_{\text{fin}}$.
It vanishes unless $X \in \bigoplus_{r \ge 1}\mathcal{B}^r_0$ when it is 
evaluated by $\mathrm{Tr}(\bk^r) = (1-q^r)^{-1}$.

In what follows, we regard 
$X_{\alpha_1,\ldots, \alpha_n}(\zeta) 
\in \bigotimes_{1 \le i \le j \le n-1}\mathcal{B}_{i,j}$ 
constructed in Section \ref{sec:zf} as a linear operator 
on $F^{\otimes n(n-1)/2} = 
\bigotimes_{1 \le i \le j \le n-1} F_{i,j}$, where
$F_{i,j}$ is a copy of $F$ on which 
$q$-boson operators from $\mathcal{B}_{i,j}$ acts as (\ref{yrk}).
Now we state 
the main corollary of Theorem \ref{th:main}.
\begin{theorem}\label{th:sin}
Stationary probabilities of 
the discrete time $U_q(A^{(1)}_n)$-zero range process 
in Section \ref{ss:sae} in basic sectors are expressed in the matrix product form 
\begin{align}\label{mst}
{\mathbb P}(\sigma_1,\ldots, \sigma_L)
= \mathrm{Tr}(X_{\sigma_1}(\mu_1)\cdots X_{\sigma_L}(\mu_L)),
\end{align}
where the trace $\mathrm{Tr}$ is taken over $F^{\otimes n(n-1)/2}$.
\end{theorem}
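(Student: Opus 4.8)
The plan is to verify that the right-hand side of (\ref{mst}) satisfies the stationarity equation $|\overline{P}\rangle = T(\lambda|\mu_1,\ldots,\mu_L)|\overline{P}\rangle$ sector by sector, and then invoke uniqueness of the stationary state in each basic sector. Writing $|\overline{P}\rangle = \sum_{(\sigma_1,\ldots,\sigma_L)} \mathrm{Tr}(X_{\sigma_1}(\mu_1)\cdots X_{\sigma_L}(\mu_L))|\sigma_1,\ldots,\sigma_L\rangle$, the content to be checked is the matrix-element identity
\begin{align*}
\mathrm{Tr}(X_{\alpha_1}(\mu_1)\cdots X_{\alpha_L}(\mu_L))
= \sum_{\beta_1,\ldots,\beta_L} T^{\alpha_1,\ldots,\alpha_L}_{\beta_1,\ldots,\beta_L}\,
\mathrm{Tr}(X_{\beta_1}(\mu_1)\cdots X_{\beta_L}(\mu_L))
\end{align*}
for all $(\alpha_i)$ in the given sector. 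The strategy, which is the standard matrix-product/ZF-algebra argument (as in \cite{KO} for $n=2$), is to introduce the auxiliary operator $X(\lambda)$ acting on the auxiliary space $W_0$ and use the ZF relation (\ref{sxx}) to ``transport'' $\mathscr{S}_{0,i}(\lambda,\mu_i)$ through the product $X_{\sigma_1}(\mu_1)\cdots X_{\sigma_L}(\mu_L)$ one site at a time.

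Concretely, I would proceed as follows. First, rewrite the ZF relation (\ref{sxx}) in the operator form $X(\mu_i)\otimes X(\lambda) = \check{\mathscr{S}}(\lambda,\mu_i)\bigl[X(\lambda)\otimes X(\mu_i)\bigr]$ on $W_i \otimes W_0$, which lets one push the auxiliary factor $X(\lambda)$ from the left of $X_{\sigma_i}(\mu_i)$ to its right at the cost of inserting $\mathscr{S}_{0,i}(\lambda,\mu_i)$. Applying this successively for $i = L, L-1, \ldots, 1$ moves $X(\lambda)$ all the way around the chain and produces exactly the monodromy $\mathscr{S}_{0,L}(\lambda,\mu_L)\cdots\mathscr{S}_{0,1}(\lambda,\mu_1)$ acting on the quantum space, with $X(\lambda)$ returning to the left end. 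Taking the trace over $F^{\otimes n(n-1)/2}$ (the internal space of the $X$'s) and over the auxiliary space $W_0$, the cyclicity of the trace over $W_0$ is what closes the chain; one obtains $\mathrm{Tr}_{W_0}\bigl(\text{monodromy}\bigr)$ applied to the vector of traces, i.e. $T(\lambda|\mu_1,\ldots,\mu_L)$ applied to $|\overline{P}\rangle$. To conclude $T|\overline{P}\rangle = |\overline{P}\rangle$ one needs the other half: that the trace over $W_0$ of a single $X(\lambda)$ (the ``boundary'' contribution) reproduces the identity, and this is precisely where the \emph{auxiliary condition} \cite[eq.(30)]{KO} enters — one must verify that the $q$-boson representation constructed via (\ref{ini})--(\ref{mdk}) satisfies it. I would check the auxiliary condition either directly from the explicit formula in Theorem \ref{th:nzm} or by an induction on $n$ parallel to the recursion (\ref{mdk}).

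The remaining point is to show that the traces in (\ref{mst}) are well-defined (finite) and not identically zero on basic sectors. Finiteness follows because, for $(\sigma_1,\ldots,\sigma_L)$ in a sector with $|m|\ge 1$, the product $X_{\sigma_1}(\mu_1)\cdots X_{\sigma_L}(\mu_L)$ lands in $\mathcal{B}_{\mathrm{fin}}^{\otimes n(n-1)/2}$ in each tensor factor — at least one $\bc$ (from the $K_{\alpha}$ part, using $|m|\ge 1$) is present and uncancelled — so the geometric-type sums defining the trace converge; non-vanishing on basic sectors is guaranteed because the explicit formula genuinely contains creation operators $\bb$ (unlike the trivial representation $Z_\alpha = K_\alpha$), so the diagonal matrix elements $\langle \mathbf{m}|\cdots|\mathbf{m}\rangle$ are generically nonzero, and can be confirmed on the smallest basic sector by direct computation as in Example \ref{ex:ijm}. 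Uniqueness of the stationary vector in each sector (a finite-dimensional irreducible Markov chain) then upgrades ``$|\overline{P}\rangle$ is stationary and nonzero'' to ``$|\overline{P}\rangle$ is \emph{the} stationary state up to scale,'' giving (\ref{mst}). I expect the main obstacle to be the verification of the auxiliary condition for the general-$n$ representation: the boundary term in the transfer-matrix manipulation does not cancel automatically, and one has to exhibit the precise infinite-product identity (involving the $(\zeta^{-1}A_{i,j})_\infty^{\pm 1}$ factors of Theorem \ref{th:nzm}) that forces $\mathrm{Tr}_{W_0} X(\lambda)$ to act as the identity on the quantum space; this is the analogue of \cite[Prop.~8 or eq.(30)]{KO} and requires the quantum-dilogarithm machinery rather than purely formal ZF manipulations.
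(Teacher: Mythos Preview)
Your approach is essentially the paper's: invoke \cite[Prop.~6]{KO}, which says ZF algebra plus the auxiliary condition yield the matrix product formula provided the trace is finite and nonzero, and then check those side conditions. The one substantive misjudgment is where you locate the difficulty. You describe the auxiliary condition as ``the main obstacle'' requiring ``quantum-dilogarithm machinery'' and ``precise infinite-product identities''; in fact it is a one-line consequence of the factorization (\ref{yry}). Since $Z_\alpha(\zeta) = Z_{0^n}(\zeta)(1^{\otimes\frac{1}{2}(n-1)(n-2)}\otimes K_\alpha)$ with $K_\alpha K_\beta = q^{\varphi(\alpha,\beta)}K_{\alpha+\beta}$, one gets immediately
\[
Z_\beta(\mu)Z_{0^n}(\lambda)^{-1}Z_\gamma(\lambda)
= Z_{0^n}(\mu)(1\otimes K_\beta)(1\otimes K_\gamma)
= q^{\varphi(\beta,\gamma)}Z_{\beta+\gamma}(\mu),
\]
which is exactly the auxiliary condition \cite[eq.(34)]{KO}. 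No infinite products enter; the hard work was already absorbed into Theorem~\ref{th:main}.

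Two smaller points. For convergence, the relevant fact is not that an uncancelled $\bc$ is present (that would only force certain contributions to have trace zero), but that every nonzero diagonal contribution carries at least one $\bk$ in each tensor slot, so that $\mathrm{Tr}(\bk^r)=(1-q^r)^{-1}$ applies; the paper argues this by induction on $n$ from (\ref{mdk}). For non-vanishing, ``check the smallest basic sector by direct computation'' is not a proof for general $L$ and $m$; the paper instead specializes to $q=0,\ \mu_i=0$ via (\ref{ymi}) and identifies the result with the known nonzero matrix product of \cite{KMO}.
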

\begin{proof}
From the expression (\ref{yry}), it immediately follows that 
\begin{align*}
Z_\beta(\mu)Z_{0^n}(\lambda)^{-1}Z_\gamma(\lambda)
= q^{\varphi(\beta,\gamma)}Z_{\beta+\gamma}(\mu)\qquad
(\beta, \gamma \in \Z_{\ge 0}^n),
\end{align*}
which agrees with  \cite[eq.(34)]{KO} called
the auxiliary condition.
In \cite[Prop.6]{KO} it was proved that the ZF algebra (\ref{mrn}) and 
the above relation imply the matrix product formula
provided that the trace is convergent and not identically zero.
The trace is convergent since (\ref{mdk})  implies via an 
inductive argument with respect to $n$ that 
nonzero contributions to it contains at least one $\bk$
in every component in $\bigotimes_{1 \le i \le j \le n-1}\mathcal{B}_{i,j}$.
The trace is neither zero.
In fact (\ref{ymi}) shows that 
$\mathrm{Tr}(Z_{\sigma_1}(\mu_1)\cdots Z_{\sigma_L}(\mu_L))$ 
is still nonzero even at $q=0, \forall \mu_i=0$ coinciding with 
the homogeneous case of \cite{KMO}.
\end{proof}

The stationary probabilities of the continuous time model (\ref{dik})  
is obtained just by specializing (\ref{mst}) to 
$\mu_1 = \cdots = \mu_L= \mu$.
Under this homogeneous choice, 
one can slightly simplify the matrix product formula (\ref{mst})
by the replacements (cf.  \cite[eq.(42)]{KO} for $n=2$):
\begin{align*}
X_{\alpha_1,\ldots, \alpha_{n}}(\mu) \rightarrow 
\frac{(\mu)_{|\alpha|}}{\prod_{i=1}^n(q)_{\alpha_i}}
\bigl(Z_{0^n}(\mu)|_{A_{i,j} \rightarrow \mu A_{i,j}}
\bigr)
\bk_{1,n-1}^{\alpha^+_1}\bc_{1,n-1}^{\alpha_1}
\cdots \bk_{n-1,n-1}^{\alpha^+_{n-1}}\bc_{n-1,n-1}^{\alpha_{n-1}}
\end{align*}
with $\alpha^+_i$ given by (\ref{K}).
This is a consequence of 
the automorphism of 
$q$-bosons (\ref{utk}) and  removal of a
common overall factor in the matrix product (\ref{mst}) within a sector
for the homogeneous choice.
After these changes the formula (\ref{mst}) with 
$\mu_1 = \cdots = \mu_L= \mu$ becomes regular 
at $\mu=0$.

\begin{example}
Set $\sigma_i=(\sigma_{i,1},\sigma_{i,2},\sigma_{i,3}) \in \Z_{\ge 0}^3$.
Up to an overall normalization, Example \ref{ex:ijm} is reproduced by the $L=2$ case of
\begin{align*}
\mathbb{P}(\sigma_1, \ldots, \sigma_L) &= 
\left(\prod_{i=1}^L
\frac{\mu_i^{-|\sigma_i|}(\mu_i)_{|\sigma_i|}}
{(q)_{\sigma_{i,1}}(q)_{\sigma_{i,2}}(q)_{\sigma_{i,3}}}\right)
\mathrm{Tr}_{F^{\otimes 3}}\left(Z_{\sigma_1}(\mu_1)\cdots 
Z_{\sigma_L}(\mu_L)\right),\\
Z_{\alpha_1,\alpha_2,\alpha_3}(\mu) &=
\frac{(\bb_1)_\infty}{(\mu^{-1}\bb_1)_\infty}(\bc_1\bb_2)_\infty
\frac{(\bk_1\bb_3)_\infty}{(\mu^{-1}\bk_1\bb_3)_\infty}
\frac{1}{(\mu^{-1}\bc_1\bb_2)_\infty}
\bk_2^{\alpha_2+\alpha_3}\bc_2^{\alpha_1}
\bk_3^{\alpha_3}\bc_3^{\alpha_2}.
\end{align*}
See (\ref{nihon}). 
We have ${\bf x}_1={\bf x}_{1,1}, 
{\bf x}_2={\bf x}_{1,2}, {\bf x}_3={\bf x}_{2,2}$
for ${\bf x}=\bb, \bc$ and $\bk$
in the notation in Theorem \ref{th:nzm}.

For the homogeneous case $\mu_1= \cdots = \mu_L = \mu$,
this may be replaced, up to normalization, by a slightly simplified version
\begin{align*}
\mathbb{P}(\sigma_1, \ldots, \sigma_L) &= 
\left(\prod_{i=1}^L
\frac{(\mu)_{|\sigma_i|}}
{(q)_{\sigma_{i,1}}(q)_{\sigma_{i,2}}(q)_{\sigma_{i,3}}}\right)
\mathrm{Tr}_{F^{\otimes 3}}\left(Z_{\sigma_1}(\mu)\cdots 
Z_{\sigma_L}(\mu)\right),\\
Z_{\alpha_1,\alpha_2,\alpha_3}(\mu) &=
\frac{(\mu \bb_1)_\infty}{(\bb_1)_\infty}(\mu\bc_1\bb_2)_\infty
\frac{(\mu \bk_1\bb_3)_\infty}{(\bk_1\bb_3)_\infty}
\frac{1}{(\bc_1\bb_2)_\infty}
\bk_2^{\alpha_2+\alpha_3}\bc_2^{\alpha_1}
\bk_3^{\alpha_3}\bc_3^{\alpha_2},
\end{align*}
which is suitable for studying the $\mu=0$ case.
\end{example}

\section{Summary}\label{sec:sum}
We have studied the Zamolodchikov-Faddeev algebra (\ref{sxx}), (\ref{mrn}) 
whose structure function is the 
$U_q(A^{(1)}_n)$ stochastic $R$ matrix 
(\ref{ask1})--(\ref{mho}) introduced in \cite{KMMO}.
A $q$-boson representation has been constructed 
either by a recursion relation with respect to the rank $n$ (Theorem \ref{th:main})
or by giving the explicit formula (Theorem \ref{th:nzm}).
It yields a matrix product formula for the stationary probabilities
in the $U_q(A^{(1)}_n)$-zero range process (Theorem \ref{th:sin}).
They extend the earlier results for $n=2$ \cite{KO} to general $n$,
although the method of the proof of the ZF algebra relation is different.
At $q=0$, the $q$-boson representation of the matrix product operators 
in this paper coincides with the homogeneous case $w_1=\cdots \cdots = w_n$ 
of \cite{KMO} as shown in (\ref{ymi}).
At $q=0$, there is another set of matrix product operators 
originating in the combinatorial $R$ in crystal theory \cite{KMO1}
and the tetrahedron equation \cite{KMO2}.
They agree with the $q=0$ case of the present paper for $n=2$
upon adjustment of conventions.
Their relation for $n\ge 3$ still requires a further investigation.

\appendix

\section{Proof of Theorem \ref{th:main}}\label{ap:proof}

We will use the same symbol $\varphi(\beta,\gamma)$ (\ref{mho}) to mean
$(\ref{mho})|_{n \rightarrow n+1}$.
Moreover 
$\sum_{1\le i < j \le n+1}\beta_i \gamma_j$ with
$\beta \in \Z_{\ge 0}^n,  \gamma \in \Z_{\ge 0}^{n+1}$
will also be denoted by $\varphi(\beta,\gamma)$.

We are going to prove Theorem \ref{th:main} by induction on $n$.
The relation $(\ref{mrn})|_{n=1}$ is valid 
since it is equivalent to $(\ref{syk})|_{n=1}$.
(The case $n=2$ has been shown in \cite{KO} 
by a method different from here.)
Thus our task in the sequel is to show $(\ref{mrn})|_{n\rightarrow n+1}$
assuming $(\ref{mrn})|_{n=n}$.

\begin{lemma}\label{le:hrk}
Under the assumption $(\ref{mrn})|_{n=n}$,
the relation $(\ref{mrn})|_{n\rightarrow n+1}$ follows from the equality
\begin{equation}\label{mrng}
\begin{split}
&\sum_{m \le s}q^{\varphi(m,\alpha)}\Phi^{(n)}_q(m|s; \lambda, \mu)
\bigotimes_{i=1}^n\bb^{s_i-m_i}\bc^{\alpha_i}\bb^{m_i}\\
&=
\sum_{\gamma \le \alpha}q^{\varphi(\gamma,\alpha)+
\varphi(s,\gamma)-\varphi(\alpha,\gamma)}
\Phi^{(n+1)}_q(\gamma|\alpha, \lambda, \mu)
\bigotimes_{i=1}^n
\bc^{\gamma_i}\bb^{s_i}\bc^{\alpha_i-\gamma_i}\qquad
(\forall s \in \Z_{\ge 0}^n, \forall \alpha \in \Z_{\ge 0}^{n+1}).
\end{split}
\end{equation}
\end{lemma}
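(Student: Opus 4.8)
The plan is to prove $(\ref{mrn})|_{n\rightarrow n+1}$ from (\ref{mrng}) and the inductive hypothesis $(\ref{mrn})|_{n=n}$ by expanding both sides of $(\ref{mrn})|_{n\rightarrow n+1}$ with the recursion (\ref{mdk}) at rank $n+1$, feeding $(\ref{mrn})|_{n=n}$ into the rank-$n$ operators produced on the left, and rewriting both sides as one sum over a pair of ``head indices'' $p,r\in\Z^n_{\ge 0}$ of the shape $\sum_{p,r}\bigl(Z_p(\lambda)Z_r(\mu)\bigr)\otimes(\text{element of }\mathcal{B}^{\otimes n})$, where $Z_p(\lambda)Z_r(\mu)$ lies in the ``head'' algebra $\bigotimes_{1\le i\le j\le n-1}\mathcal{B}_{i,j}$ and the second factor in the ``tail'' $\mathcal{B}^{\otimes n}$ of the last $n$ tensor components. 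Since head and tail sit in commuting tensor factors, it is enough to verify, for each fixed $(p,r)$, that the tail contributions agree; the content of the lemma is that this agreement is exactly (\ref{mrng}) with $s$ replaced by $r$, up to rescaling. Explicitly, by (\ref{mdk}) and (\ref{aim}), $Z_\alpha(\mu)Z_\beta(\lambda)=\sum_{l,m}g_l(\mu)g_m(\lambda)\,Z_l(\mu)Z_m(\lambda)\otimes\bigotimes_{i=1}^{n}\bb^{l_i}\bk^{\alpha^+_i}\bc^{\alpha_i}\bb^{m_i}\bk^{\beta^+_i}\bc^{\beta_i}$; inserting $(\ref{mrn})|_{n=n}$ for $Z_l(\mu)Z_m(\lambda)$ and setting $r=l+m-p$, the part with head $Z_p(\lambda)Z_r(\mu)$ becomes a sum over $0\le m\le r$ with $l=p+r-m$. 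On the right, expanding $Z_\gamma(\lambda)$ and $Z_{\alpha+\beta-\gamma}(\mu)$ by (\ref{mdk}) produces the same head $Z_p(\lambda)Z_r(\mu)$ carrying the tail $\bigotimes_{i=1}^{n}\bb^{p_i}\bk^{\gamma^+_i}\bc^{\gamma_i}\bb^{r_i}\bk^{(\alpha+\beta-\gamma)^+_i}\bc^{\alpha_i+\beta_i-\gamma_i}$, summed over $\gamma\le\alpha$.

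For a fixed $(p,r)$ I would reduce the resulting tail identity to (\ref{mrng}) in two bookkeeping moves. First, strip common factors. Since $m\le r$ one has $\bb^{p_i+r_i-m_i}=\bb^{p_i}\bb^{r_i-m_i}$, so $\bigotimes_i\bb^{p_i}$ can be pulled off on the left; on the right it already sits at the left of every term, and since the $q$-boson algebra is an integral domain for generic $q$, this factor cancels. Then push every $\bk$ to the right of its own tensor component by (\ref{akn}); this turns all $\bk$'s into explicit powers of $q$ and, using $\gamma^+_i+(\alpha+\beta-\gamma)^+_i=\alpha^+_i+\beta^+_i$ and $\alpha_i+\beta_i-\gamma_i\ge\beta_i$, leaves the common trailing factor $\bigotimes_i\bc^{\beta_i}\bk^{\alpha^+_i+\beta^+_i}$ on both sides, which cancels likewise. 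What survives is $\sum_{0\le m\le r}(\mathrm{scalar})\,\bigotimes_i\bb^{r_i-m_i}\bc^{\alpha_i}\bb^{m_i}$ against $\sum_{\gamma\le\alpha}(\mathrm{scalar})\,\bigotimes_i\bc^{\gamma_i}\bb^{r_i}\bc^{\alpha_i-\gamma_i}$, which is precisely the operator content of the two sides of (\ref{mrng}) at $s=r$.

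It then remains to reconcile the scalar prefactors. On the left the weight contains $g_{p+r-m}(\mu)g_m(\lambda)\Phi^{(n)}_q(p\,|\,p+r-m;\lambda,\mu)$, and the identity displayed just after (\ref{mrn}), applied with its $(\gamma,\alpha,\beta)$ taken to be $(m,r,p)$, rewrites this as $g_p(\lambda)g_r(\mu)\,q^{\varphi(r-m,p-m)}\Phi^{(n)}_q(m\,|\,r;\lambda,\mu)$; thus the summation acquires the ``top'' index $m$ demanded by (\ref{mrng}), the $q$-power combines with the $q^{\varphi(l-p,m-p)}$ coming from $(\ref{mrn})|_{n=n}$, and the factor $g_p(\lambda)g_r(\mu)$ detaches and cancels against its counterpart on the right. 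All that is left are powers of $q$ originating from the $\bk$-commutations above, from this $g$-identity, and from the factor $q^{\varphi(\alpha-\gamma,\beta-\gamma)}$ in the original relation. Expanding them by bilinearity of $\varphi$ and using elementary identities such as $\sum_i\alpha^+_i\beta_i=\varphi(\beta,\alpha)$, $\sum_i\gamma^+_i\gamma_i=\varphi(\gamma,\gamma)$, $\sum_i\gamma^+_ir_i=\varphi(r,\gamma)$ and $\varphi(\alpha-\gamma,\beta-\gamma)=\varphi(\alpha,\beta)-\varphi(\alpha,\gamma)-\varphi(\gamma,\beta)+\varphi(\gamma,\gamma)$, one finds that the left side has become a common scalar $c$ times the left-hand side of (\ref{mrng}) at $s=r$ while the right side has become the same $c$ times its right-hand side (the surviving exponent on the $\gamma$-sum collapsing to $\varphi(\gamma,\alpha)+\varphi(r,\gamma)-\varphi(\alpha,\gamma)$). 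Hence (\ref{mrng}) with $s=r$ gives the tail identity, and summing over $(p,r)$ yields $(\ref{mrn})|_{n\rightarrow n+1}$.

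The step I expect to be the real work is this last reconciliation of exponents: four independent sources of $q$-powers --- the $\bk$-commutations in $\mathcal{B}^{\otimes n}$, the normalisations $g_\alpha$ of (\ref{aim}), the weight $q^\varphi$ of the ZF relation, and the $q^\varphi$ inside $\Phi_q$ in (\ref{mho}) --- must be shown to collapse to exactly the two exponents appearing in (\ref{mrng}), and the argument also uses the domain property of the $q$-boson algebra to cancel the common left and right factors. This reduction isolates all genuinely new content in the purely bosonic identity (\ref{mrng}), which can then be treated separately, for instance by induction on $n$ using the factorisations (\ref{yuk}) of $\Phi^{(n+1)}_q$.
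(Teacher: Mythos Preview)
Your approach is essentially the paper's: expand both sides of $(\ref{mrn})|_{n\to n+1}$ via (\ref{mdk}), apply the rank-$n$ ZF relation to the head factors, compare the $\mathcal{B}^{\otimes n}$ tails for each fixed pair of head indices, and strip the common $\bb$-, $\bc$- and $\bk$-factors to arrive at (\ref{mrng}); the only differences are cosmetic (the paper stays in the $X$-normalisation and invokes (\ref{kag}) directly rather than your route through $(\ref{mrn})|_{n=n}$ plus the $g$-identity, and it pulls the $\bk$'s to the left instead of pushing them right---your exponent bookkeeping checks out either way). One small remark: appealing to the domain property of $\mathcal{B}$ to cancel the common left/right factors is unnecessary for the implication (\ref{mrng}) $\Rightarrow$ $(\ref{mrn})|_{n\to n+1}$, since it suffices that multiplying (\ref{mrng}) on either side by those factors reproduces the intermediate tail identity; this is why the paper phrases the coefficient-matching step as ``sufficient, though not a priori necessary.''
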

\begin{proof}
Substitute $(\ref{mdk})|_{n \rightarrow n+1}$ 
into $(\ref{mrn})|_{n\rightarrow n+1}$.
Applying (\ref{kag}) on the LHS, we get
\begin{align*}
\text{LHS}&= \sum_{m,l \in \Z_{\ge 0}^n}\sum_{t \le m}
\Phi^{(n)}_q(l|m+l-t;\lambda,\mu)
X_t(\lambda)X_{m+l-t}(\mu)
\otimes \bigotimes_{i=1}^n(m_i,\alpha^+_i,\alpha_i)(l_i, \beta^+_i, \beta_i),
\\
\text{RHS}&= \sum_{\gamma \le \alpha}
q^{\varphi(\alpha-\gamma,\beta-\gamma)}
\Phi^{(n+1)}_q(\gamma|\alpha, \lambda, \mu)
\sum_{t,s \in \Z_{\ge 0}^n}
X_t(\lambda)X_s(\mu)
\otimes \bigotimes_{i=1}^n
(t_i, \gamma^+_i, \gamma_i)(s_i, \alpha^+_i+\beta^+_i-\gamma^+_i, 
\alpha_i+\beta_i-\gamma_i),
\end{align*}
where the temporal notation $(i,j,k) := \bb^i \bk^j \bc^k$ is used.
The symbol $\alpha^+_i$ is defined by $(\ref{K})|_{n \rightarrow n+1}$ 
and $\beta^+_i, \gamma^+_i$ are similar.
Thus in order to prove $\text{LHS}=\text{RHS}$, 
it is sufficient, though not a priori necessary, to show that 
the coefficients of $X_t(\lambda)X_s(\mu)$ 
are equal for each choice of $t,s \in \Z_{\ge 0}^n$ and 
$\alpha, \beta \in \Z_{\ge 0}^{n+1}$.
Explicitly it reads
\begin{equation*}
\begin{split}
&\sum_{t \le m \le t+s}\Phi^{(n)}_q(t+s-m|s; \lambda, \mu)
\bigotimes_{i=1}^n(m_i,\alpha^+_i,\alpha_i)(t_i+s_i-m_i, \beta^+_i, \beta_i)\\
&=\sum_{\gamma \le \alpha}q^{\varphi(\alpha-\gamma,\beta-\gamma)}
\Phi^{(n+1)}_q(\gamma|\alpha, \lambda, \mu)
\bigotimes_{i=1}^n
(t_i, \gamma^+_i, \gamma_i)(s_i, \alpha^+_i+\beta^+_i-\gamma^+_i, 
\alpha_i+\beta_i-\gamma_i).
\end{split}
\end{equation*}
One can pull out the common factor 
$\bigotimes_{i=1}^n\bk^{\alpha^+_i+\beta^+_i}$ from this equality to the left.
The result reads
\begin{equation*}
\begin{split}
&\sum_{t \le m \le t+s}q^{\varphi(t+s-m,\alpha)}
\Phi^{(n)}_q(t+s-m|s; \lambda, \mu)
\bigotimes_{i=1}^n(m_i,0,\alpha_i)(t_i+s_i-m_i,0, \beta_i)\\
&=
\sum_{\gamma \le \alpha}q^{\varphi(\gamma,\alpha)+
\varphi(s,\gamma)-\varphi(\alpha,\gamma)}
\Phi^{(n+1)}_q(\gamma|\alpha, \lambda, \mu)
\bigotimes_{i=1}^n
(t_i, 0, \gamma_i)(s_i, 0, \alpha_i+\beta_i-\gamma_i).
\end{split}
\end{equation*}
Note that this further contains 
a common rightmost factor $\bigotimes_{i=1}^n \bc^{\beta_i}$
and a common leftmost factor $\bigotimes_{i=1}^n\bb^{t_i}$.
Removing them leads to (\ref{mrng}).
\end{proof}

\begin{lemma}
The following identities hold:
\begin{align}
\bc^m\bb^s &= \sum_{j=0}^s
q^{j(m-s+j)}(q^m;q^{-1})_{s-j}\binom{s}{j}_{\!q}
\bb^j\bc^{m-s+j} \qquad (m,s \in \Z_{\ge 0}), \label{cb}\\
z^s&=\sum_{r=0}^s (-1)^r q^{r(r-1)/2}\binom{s}{r}_{\!q}(z;q^{-1})_r
\qquad 
(s \in \Z_{\ge 0}). \label{zs}
\end{align}
\end{lemma}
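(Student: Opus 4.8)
\textbf{Plan for proving the two identities \eqref{cb} and \eqref{zs}.}
Both identities are of a purely algebraic/combinatorial nature, so the natural strategy is induction on $s$. I would treat \eqref{zs} first, since it is an identity in a single variable $z$ (no $q$-bosons involved) and is presumably needed as an auxiliary step for \eqref{cb}: expanding $z^s = z\cdot z^{s-1}$ and substituting the inductive hypothesis, one regroups the sum over $r$ after writing $z\,(z;q^{-1})_r = (z;q^{-1})_{r+1} + (1-q^{-r})(z;q^{-1})_r$ — i.e. using $(z;q^{-1})_{r+1} = (z;q^{-1})_r(1-zq^{-r})$ to solve for $z(z;q^{-1})_r$. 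Collecting the coefficient of $(z;q^{-1})_r$ on both sides then reduces \eqref{zs} to a one-line $q$-binomial recursion of Pascal type, $q^{r(r-1)/2}\binom{s}{r}_q = q^{(r-1)(r-2)/2}\binom{s-1}{r-1}_q + (\text{something})\,q^{r(r-1)/2}\binom{s-1}{r}_q$, which is exactly one of the two standard forms of the $q$-Pascal rule after tracking the powers of $q$. Checking the base case $s=0$ (both sides equal $1$) and $s=1$ (both sides equal $z$) is immediate.

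\textbf{Proving \eqref{cb}.}
For the operator identity I would again induct on $s$, using the defining relations \eqref{akn} in the form $\bc\,\bb = 1 - q\bk$ together with the commutation rules $\bk\bb = q\bb\bk$ and $\bk\bc = q^{-1}\bc\bk$ (note the slight typo $\bk\bc = q^{-1}\bk\bc$ in \eqref{akn} should read $\bk\bc = q^{-1}\bc\bk$), and the eigenvalue-type relation one derives from them, e.g. $\bc^m \bb = q^{-m}\bb\,\bc^m + (\text{lower})$, or more cleanly $\bb\,\bc^m = \bc^m\bb - (1-q^m)\bc^{m-1}$ obtained by an easy secondary induction on $m$. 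Then $\bc^m\bb^s = (\bc^m\bb^{s-1})\bb$; substitute the inductive hypothesis for $\bc^m\bb^{s-1}$, push the trailing $\bb$ to the right past each $\bb^j\bc^{m-s+1+j}$ using the just-mentioned commutation relation, and collect terms of the form $\bb^j\bc^{m-s+j}$. As with \eqref{zs}, matching coefficients reduces everything to a $q$-Pascal identity plus bookkeeping of the factors $q^{j(m-s+j)}$ and $(q^m;q^{-1})_{s-j}$; the $(q^m;q^{-1})$-factors shift correctly because the extra $\bb$ produces a $(1-q^{m-s+1+j})$ from the relation $\bb\,\bc^{m'} = \bc^{m'}\bb - (1-q^{m'})\bc^{m'-1}$, and $(q^m;q^{-1})_{s-1-j}(1-q^{m-s+1+j}) = (q^m;q^{-1})_{s-j}$. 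The base case $s=0$ is trivial and $s=1$ is just $\bc^m\bb = q^{-m}\bb\bc^m + (1-q^m)$... actually $\bc^m\bb = \bb\bc^m$ adjusted — which one verifies directly from $\bc\bb = 1-q\bk$ and the $\bk$-commutations.

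\textbf{Expected main obstacle.}
The genuinely nontrivial content is in \eqref{cb}: the single-variable identity \eqref{zs} is standard (it is the $q$-binomial theorem / inversion formula in disguise), but for \eqref{cb} one must move a creation operator $\bb$ across a block $\bc^{m-s+j}$ while simultaneously keeping track of \emph{three} $q$-dependent pieces — the power $q^{j(m-s+j)}$, the finite product $(q^m;q^{-1})_{s-j}$, and the $q$-binomial $\binom{s}{j}_q$ — and verifying they all recombine under the Pascal recursion. The cleanest route may in fact be to \emph{derive} \eqref{cb} from \eqref{zs} by realizing $\bc,\bb$ in a representation where $\bk$ acts diagonally (e.g. the Fock representation \eqref{yrk}, or formally treating $q^m$ as the variable $z$), so that \eqref{cb} becomes the operator statement obtained by substituting $z \to \bk$-eigenvalue into \eqref{zs} after normal-ordering; I would set up that reduction carefully, as it trades a delicate operator induction for the already-proven scalar identity, and then only the base-of-induction normal-ordering lemma $\bb\bc^m = \bc^m\bb - (1-q^m)\bc^{m-1}$ needs an independent (one-line) proof.
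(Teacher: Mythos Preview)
Your inductive plan is sound and would succeed once the details are straightened out; the paper's own proof, by contrast, is a single sentence: ``These relations can easily be checked by means of the $q$-binomial theorem.'' So the paper appeals directly to the standard finite $q$-binomial expansion (and its inverse) rather than redoing the induction, while you unwind that induction explicitly. Your route is more self-contained; the paper's is terser but assumes the reader recognizes \eqref{zs} as the inversion of the Gauss $q$-binomial formula and \eqref{cb} as the normal-ordering formula for the $q$-oscillator $\bc\bb - q\bb\bc = 1-q$ (which follows from \eqref{akn}).

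Two computational slips in your sketch to fix before carrying it out: from $(z;q^{-1})_{r+1}=(z;q^{-1})_r(1-zq^{-r})$ one actually gets $z\,(z;q^{-1})_r = q^{r}(z;q^{-1})_r - q^{r}(z;q^{-1})_{r+1}$, not the expression you wrote; and the one-step commutation is $\bc^{m}\bb = q^{m}\bb\,\bc^{m} + (1-q^{m})\bc^{m-1}$, so $\bb\,\bc^{m} = q^{-m}\bc^{m}\bb - q^{-m}(1-q^{m})\bc^{m-1}$ rather than $\bb\,\bc^{m} = \bc^{m}\bb - (1-q^{m})\bc^{m-1}$. With these corrections the coefficients recombine via the $q$-Pascal rule exactly as you anticipate, and your observation that $(q^{m};q^{-1})_{s-1-j}(1-q^{m-s+1+j}) = (q^{m};q^{-1})_{s-j}$ is the key bookkeeping identity for \eqref{cb}.
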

\begin{proof}
These relations can easily be checked by means of the $q$-binomial theorem.
\end{proof}

\begin{lemma}\label{le:hik}
The equality (\ref{mrng})  
is equivalent to 
\begin{equation}\label{lin}
\sum_{m \le s} q^{\varphi(m,\alpha)+ \sum_{i=1}^n\alpha_im_i}
\Phi^{(n)}_q(m|s;\lambda,\mu) \\ 
= \sum_{\gamma \le \alpha}
q^{\varphi(\gamma,\alpha)+\varphi(s,\gamma)-
\varphi(\alpha,\gamma)+\sum_{i=1}^ns_i\gamma_i}
\Phi^{(n+1)}_q(\gamma|\alpha;\lambda,\mu)
\end{equation} 
for any $s \in \Z_{\ge 0}^n$ and $\alpha \in \Z_{\ge 0}^{n+1}$.
\end{lemma}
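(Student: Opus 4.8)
The plan is to reduce the operator identity \eqref{mrng}, which lives in the $n$-fold tensor product of $q$-boson algebras, to the scalar identity \eqref{lin} by extracting a single numerical coefficient from each side. The key observation is that both sides of \eqref{mrng} are sums of terms of the form $\bigotimes_{i=1}^n \bb^{a_i}\bc^{b_i}$ after reordering, so it suffices to compare the coefficient of one conveniently chosen monomial basis element; I would pick the ``diagonal'' term $\bigotimes_{i=1}^n \bb^{s_i}\bc^{\alpha_i}$ (note $|\alpha|$ here refers to the $(n+1)$-array while the $\bc$-exponents only run over $i=1,\dots,n$, so this is the term with no cancellation between $\bb$ and $\bc$ in any tensor slot). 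The idea is that once one knows the coefficients of \emph{this particular} monomial agree for all $s$ and $\alpha$, one can run the comparison backwards --- since \eqref{zs} lets one re-expand powers $z^s$ in terms of the $q$-shifted factorials $(z;q^{-1})_r$, the map sending the scalar data to the operator data is invertible, so equality of the distinguished coefficients is not merely necessary but sufficient.

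Concretely, I would first use \eqref{cb} to bring each factor $\bb^{s_i-m_i}\bc^{\alpha_i}\bb^{m_i}$ on the left of \eqref{mrng} into normal-ordered form $\sum_j (\text{coeff}) \bb^{?}\bc^{?}$; the term contributing to $\bb^{s_i}\bc^{\alpha_i}$ is the one where the $\bc$'s and $\bb$'s do not annihilate, and tracking the power of $q$ from \eqref{cb} produces the factor $q^{\alpha_i m_i}$ (from the $j=s_i-m_i$, i.e.\ leading, term of the expansion, together with $(q^{\alpha_i};q^{-1})_0=1$). Summing over $i$ gives the extra exponent $\sum_i \alpha_i m_i$ appearing in the left side of \eqref{lin}. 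Symmetrically, on the right of \eqref{mrng} the factor $\bc^{\gamma_i}\bb^{s_i}\bc^{\alpha_i-\gamma_i}$ is normal-ordered via \eqref{cb}, and its contribution to $\bb^{s_i}\bc^{\alpha_i}$ carries $q^{s_i\gamma_i}$, yielding the exponent $\sum_i s_i\gamma_i$ on the right of \eqref{lin}. Combining these with the prefactors already visible in \eqref{mrng} gives exactly \eqref{lin}.

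The remaining --- and I expect main --- difficulty is justifying the \emph{equivalence} rather than just the implication: that recovering \eqref{lin} for all $s,\alpha$ forces \eqref{mrng}. The clean way is to observe that \eqref{mrng}, read as an identity of elements of $\bigotimes_{i=1}^n\mathcal{B}$, is a polynomial identity in the ``spectral'' dependence carried by $s$ through the $\bb$-powers; using \eqref{zs} one rewrites the monomial $\bigotimes \bb^{s_i-m_i}\bc^{\alpha_i}\bb^{m_i}$ (after normal ordering) as a triangular linear combination of a fixed family of ``test'' elements indexed by the $q^{-1}$-factorials, and the change of basis \eqref{cb}--\eqref{zs} is unitriangular hence invertible. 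Thus matching the single coefficient across the invertible transform is equivalent to matching everything. I would phrase this as: define the linear functional $\chi_{s,\alpha}$ picking out the coefficient of $\bigotimes_i \bb^{s_i}\bc^{\alpha_i}$; show $\chi_{s,\alpha}(\text{LHS of \eqref{mrng}})$ and $\chi_{s,\alpha}(\text{RHS})$ are the two sides of \eqref{lin}; and then note that the collection $\{\chi_{s,\alpha}\}_{s}$ separates the relevant finite-dimensional space of operators by the triangularity of \eqref{cb} and \eqref{zs}. That triangularity bookkeeping --- in particular checking that no two distinct $(m,\gamma)$ contribute to the same $\bb$-$\bc$ monomial in a way that could conspire to cancel --- is the step requiring the most care, and is where \eqref{cb} and \eqref{zs} are doing the real work.
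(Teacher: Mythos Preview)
Your extraction of the ``leading'' coefficient is correct and matches the paper's computation at $p=0$: applying \eqref{cb} to $\bc^{\alpha_i}\bb^{m_i}$ and keeping only the top term $j=m_i$ produces the factor $q^{\alpha_i m_i}$, and similarly on the right one gets $q^{s_i\gamma_i}$; this yields \eqref{lin} exactly as you say. So the implication \eqref{mrng} $\Rightarrow$ \eqref{lin} is fine.

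The gap is in the reverse direction. Your triangularity sketch does not actually close, and the invocation of \eqref{zs} is misplaced (in the paper that identity is used in the proof of Lemma~\ref{le:skb}, not here). The issue is that for fixed $s,\alpha$ the normal-ordered expansion of \eqref{mrng} involves \emph{all} the basis vectors $\bigotimes_i \bb^{s_i-p_i}\bc^{\alpha_i-p_i}$ with $0\le p_i\le\min(s_i,\alpha_i)$, and the coefficient at general $p$ is not a linear combination of the $p=0$ coefficients at other $(s',\alpha')$ with the \emph{same} $(\lambda,\mu)$: the $p$-dependence enters through $\binom{m_i}{p_i}_q$ on the left and $\binom{\gamma_i}{p_i}_q$ on the right, which are not related by any common unitriangular change of basis in the $s$-variable alone. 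So the ``separating functionals'' picture, as you describe it, does not go through.

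What the paper does instead is extract the coefficient of $\bigotimes_i \bb^{s_i-p_i}\bc^{\alpha_i-p_i}$ for \emph{every} $p$, obtaining a family of scalar identities \eqref{noi} indexed by $p$. After inserting the explicit form \eqref{mho} of $\Phi_q$ and stripping common factors, one performs the substitution
\[
s\to s+p,\quad m\to m+p,\quad \alpha\to\alpha+p^{\!\vee},\quad \gamma\to\gamma+p^{\!\vee},\quad \lambda\to q^{-|p|}\lambda,\quad \mu\to q^{-|p|}\mu,
\]
and the $p$-dependence drops out completely, leaving precisely \eqref{lin}. The crucial point you are missing is the shift in the spectral parameters $\lambda,\mu$: the $p\neq 0$ coefficient condition at $(\lambda,\mu)$ is the $p=0$ condition at $(q^{|p|}\lambda,q^{|p|}\mu)$, not at the original parameters. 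Since the equivalence is asserted uniformly in $\lambda,\mu$ this is harmless, but it means the argument is a change of variables in the full parameter space rather than a linear-algebraic triangularity in the operator coefficients at fixed $(\lambda,\mu)$.
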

\begin{proof}
Comparing the coefficient of the basis vector  
$\bigotimes_{i=1}^n\bb^{s_i-p_i}\bc^{\alpha_i-p_i}$ of $\mathcal{B}^{\otimes n}$  
on the both sides of (\ref{mrng})
by means of (\ref{cb}),
it is translated into the equality of the coefficients
\begin{align}
&\sum_{m \le s}q^{\varphi(m,\alpha)}
\Phi^{(n)}_q(m|s;\lambda,\mu) \prod_{i=1}^n
q^{(m_i-p_i)(\alpha_i-p_i)}
\binom{\alpha_i}{p_i}_{\!q}\binom{m_i}{p_i}_{\!q}\nonumber\\
&= \sum_{\gamma \le \alpha}q^{\varphi(\gamma,\alpha)
+\varphi(s^{\!\vee}-\alpha,\gamma)}
\Phi^{(n+1)}_q(\gamma|\alpha;\lambda,\mu) \prod_{i=1}^n
q^{(\gamma_i-p_i)(s_i-p_i)}
\binom{s_i}{p_i}_{\!q}\binom{\gamma_i}{p_i}_{\!q}
\label{noi}
\end{align}
for any arrays of nonnegative integers
$\alpha=(\alpha_1, \ldots, \alpha_n,\alpha_{n+1}),
s=(s_1,\ldots, s_n),
p=(p_1,\ldots, p_n)$ such that 
$p_i \le \min(s_i,\alpha_i)$ for all $1 \le i \le n$.
On the RHS of (\ref{noi}), we have introduced the notation
\begin{align}\label{yum}
s^{\!\vee} = (s_1,\ldots, s_n,0)
\end{align}
for later convenience.
Of course $\varphi(s^{\!\vee},\gamma) = \varphi(s,\gamma)$ by the definition.
By substituting (\ref{mho}) into (\ref{noi}) 
and removing a common overall factor 
from the both sides, it becomes
\begin{equation*}
\begin{split}
&\sum_{p \le m \le s}q^{\varphi(m,\alpha-m^{\!\vee})+\varphi(s,m)}
\nu^{|m|}\frac{(\lambda)_{|m|}(\nu)_{|s|-|m|}}{(\mu)_{|s|}}
\prod_{i=1}^n
q^{(m_i-p_i)(\alpha_i-p_i)}\binom{s_i-p_i}{m_i-p_i}_{\!q}\\
&= 
\sum_{p^{\!\vee} \le \gamma \le \alpha}q^{
\varphi(\gamma,\alpha-\gamma)+\varphi(s, \gamma)}
\nu^{|\gamma|}\frac{(\lambda)_{|\gamma|}(\nu)_{|\alpha|-|\gamma|}}
{(\mu)_{|\alpha|}}\binom{\alpha_{n+1}}{\gamma_{n+1}}_{\!q}
\prod_{i=1}^n
q^{(s_i-p_i)(\gamma_i-p_i)}\binom{\alpha_i-p_i}{\gamma_i-p_i}_{\!q},
\end{split}
\end{equation*}
where $\nu = \mu/\lambda$ and 
$m^{\!\vee}, p^{\!\vee}$ are defined similarly to (\ref{yum}).
By the replacement
\begin{align*}
s \rightarrow s+p,\;\; m \rightarrow m+p,\;\;
\alpha \rightarrow \alpha+p^{\!\vee},\;\;
\gamma \rightarrow \gamma+p^{\!\vee},\;\;
\lambda \rightarrow q^{-|p|}\lambda,\;\;
\mu \rightarrow q^{-|p|}\mu,
\end{align*}
the above relation is cast into
\begin{equation}\label{aoi}
\begin{split}
&\sum_{m \le s}q^{\varphi(m,\alpha)+\varphi(s-m,m)+\sum_{i=1}^n\alpha_im_i}
\nu^{|m|}\frac{(\lambda)_{|m|}(\nu)_{|s|-|m|}}{(\mu)_{|s|}}
\prod_{i=1}^n\binom{s_i}{m_i}_{\!q}\\
&=\sum_{\gamma \le \alpha}
q^{\varphi(\gamma,\alpha)
+\varphi(s^{\!\vee}-\gamma,\gamma)+\sum_{i=1}^ns_i\gamma_i}
\nu^{|\gamma|}\frac{(\lambda)_{|\gamma|}
(\nu)_{|\alpha|-|\gamma|}}{(\mu)_{|\alpha|}}
\prod_{i=1}^{n+1}\binom{\alpha_i}{\gamma_i}_{\!q},
\end{split}
\end{equation}
which turns out to be free from $p=(p_1,\ldots, p_n)$.
This coincides with (\ref{lin}).
\end{proof}

So far we have shown that Theorem \ref{th:main} is a corollary of  (\ref{lin}).
Let us proceed to a proof of the latter.
\begin{lemma}\label{le:kyk}
The equality (\ref{lin}) holds for $n \in \Z_{\ge 0}$.
\end{lemma}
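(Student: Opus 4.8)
The plan is to prove the identity (\ref{lin}) by induction on $n$, exploiting the recursive factorization (\ref{yuk}) of $\Phi_q^{(n)}$ into a rank-one factor times a lower-rank factor. The base case $n=0$ is trivial: the left side is the empty product (value $1$, since $s$ is an empty array so $m=s$ is forced) and the right side is $\sum_{\gamma\le\alpha}\Phi_q^{(1)}(\gamma\mid\alpha;\lambda,\mu)=1$ by (\ref{syk}) with $n=1$, where here $\alpha=(\alpha_1)\in\Z_{\ge 0}^1$. So the content is entirely in the inductive step.

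For the inductive step, I would peel off the first coordinate on both sides of (\ref{lin}). Write $m=(m_1,\overline m)$, $s=(s_1,\overline s)$, $\alpha=(\alpha_1,\overline\alpha)$ with $\overline\alpha\in\Z_{\ge 0}^n$. Using (\ref{yuk}) one has $\Phi_q^{(n)}(m\mid s;\lambda,\mu)=\Phi_q^{(1)}(m_1\mid s_1;\lambda,\mu)\,\Phi_q^{(n-1)}(\overline m\mid\overline s;q^{m_1}\lambda,q^{s_1}\mu)$, and similarly $\Phi_q^{(n+1)}(\gamma\mid\alpha;\lambda,\mu)=\Phi_q^{(1)}(\gamma_1\mid\alpha_1;\lambda,\mu)\,\Phi_q^{(n)}(\overline\gamma\mid\overline\alpha;q^{\gamma_1}\lambda,q^{\alpha_1}\mu)$. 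The quadratic exponents in (\ref{lin}) must be split accordingly; the key bookkeeping is that $\varphi(m,\alpha)=\sum_{j\ge 2}m_1\alpha_j+\varphi(\overline m,\overline\alpha)=m_1(|\alpha|-\alpha_1)+\varphi(\overline m,\overline\alpha)$ for the $\Z_{\ge 0}^n\times\Z_{\ge 0}^{n+1}$ pairing, and analogous expansions for $\varphi(s,\gamma)$, $\varphi(\alpha,\gamma)$, $\varphi(\gamma,\alpha)$, together with the ``diagonal'' terms $\sum\alpha_im_i$ and $\sum s_i\gamma_i$. After this expansion, the inner sums over $\overline m$ (resp. $\overline\gamma$) on each side should, by the induction hypothesis applied with shifted spectral parameters $\lambda\to q^{m_1}\lambda$, $\mu\to q^{s_1}\mu$ (resp. $\lambda\to q^{\gamma_1}\lambda$, $\mu\to q^{\alpha_1}\lambda$—note $\nu=\mu/\lambda$ is unchanged under $c\lambda,c\mu$ only in the ratio, so one tracks $\lambda,\mu$ carefully), collapse to a common closed form. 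One is then left with a purely rank-one identity: a single sum over $m_1\le s_1$ on the left equals a single sum over $\gamma_1\le\alpha_1$ on the right, with explicit $q$-Pochhammer and $q$-binomial weights. This residual identity is where I would invoke the two combinatorial lemmas already recorded, (\ref{cb}) and especially (\ref{zs}) (the $q$-binomial inversion $z^s=\sum_r(-1)^rq^{r(r-1)/2}\binom{s}{r}_q(z;q^{-1})_r$), to convert powers of $q^{s_1}$ or $q^{\alpha_1}$ arising from the shifted arguments into finite sums that can be matched term by term, again using the $q$-binomial theorem.

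The main obstacle I expect is the exponent arithmetic: verifying that after the rank-one peeling the leftover quadratic form in $m_1,s_1,\alpha_1,|\overline s|,|\overline\alpha|$ on the left agrees, after applying the induction hypothesis, with the corresponding form in $\gamma_1,\alpha_1,|\overline s|,|\overline\alpha|$ on the right. In particular the terms $\varphi(s,\gamma)-\varphi(\alpha,\gamma)$ generate cross terms like $(|\overline s|-|\overline\alpha|)\gamma_1$ that have to be absorbed into the shift of the spectral parameter $q^{\gamma_1}\lambda$ inside the $\Phi^{(n)}$ on the right, and matched against the shift $q^{s_1}\mu$ on the left via $\nu^{|\gamma|}$-type prefactors; getting the signs and the split of the $q^{|m|}$, $\nu^{|\gamma|}$ prefactors exactly right is delicate. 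A secondary subtlety is that the shifted parameter $q^{\alpha_1}\mu$ appearing on the right does not have the same $\mu$ as the left's $q^{s_1}\mu$, so one cannot apply the induction hypothesis naively coordinate by coordinate; instead I would apply it \emph{once} in the form of a closed evaluation of each side's inner sum and then compare. If that closed form is not immediately available, the fallback is to prove (\ref{aoi})/(\ref{lin}) directly by expanding both sides as sums over all coordinates and showing the total coefficient of each monomial in $\lambda$ (equivalently, summing over $|m|$ or $|\gamma|$ fixed) matches, reducing to the single-variable identity obtained from (\ref{zs}). Either route terminates in routine $q$-series manipulation once the exponents are reconciled.
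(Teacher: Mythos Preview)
Your strategy---induction on $n$, peeling off the first coordinate via (\ref{yuk}), reducing to a rank-one identity handled by (\ref{zs})---is exactly the paper's approach. However, your treatment of the inductive step is muddled at one point: you propose applying the induction hypothesis to the inner sums on \emph{both} sides, then fall back to a ``closed evaluation of each side's inner sum.'' Neither works as stated. On the right side the inner sum runs over $\overline{\gamma}\in\Z_{\ge 0}^n$, which is the \emph{output} side of the rank-$(n{-}1)$ hypothesis, not its input; there is no lower-rank identity to invoke there, and there is no closed form for either inner sum.

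The clean resolution (and what the paper does) is to apply the induction hypothesis \emph{only to the left side}: the sum over $\overline{m}\in\Z_{\ge 0}^{n-1}$, with shifted parameters $(q^{m_1}\lambda,\,q^{s_1}\mu)$ and ambient array $\overline{\alpha}\in\Z_{\ge 0}^n$, becomes via $(\ref{lin})|_{n\to n-1}$ a sum over $\overline{\gamma}\in\Z_{\ge 0}^n$. Now both sides are double sums over the \emph{same} $\overline{\gamma}$ (together with $m_1\le s_1$ on the left, $\gamma_1\le\alpha_1$ on the right), and one compares them termwise in $\overline{\gamma}$. After stripping a common factor, the residual rank-one identity is not an evaluation but a \emph{symmetry}: $f(s_1,\alpha_1;\lambda',\mu')=f(\alpha_1,s_1;\lambda',\mu')$ for
\[
f(s,t;\lambda,\mu)=\sum_{i=0}^t q^{si}\nu^i\,\frac{(\lambda)_i(\nu)_{t-i}}{(\mu)_t}\binom{t}{i}_{\!q},
\qquad \lambda'=q^{|\overline{\gamma}|}\lambda,\ \mu'=q^{|\overline{\alpha}|}\mu.
\]
This symmetry is exactly where (\ref{zs}) enters (it is Lemma~\ref{le:skb} in the paper). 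So your ``secondary subtlety'' about the mismatched shifts $q^{s_1}\mu$ versus $q^{\alpha_1}\mu$ dissolves: induction is never applied on the right, and the asymmetry between $s_1$ and $\alpha_1$ is precisely what the rank-one symmetry $f(s,t)=f(t,s)$ absorbs.
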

\begin{proof}
Again we invoke the induction on $n$.
At $n=0$,  (\ref{lin}) reads
$1= \sum_{\gamma_1 \le \alpha_1} \Phi^{(1)}_q(\gamma_1|\alpha_1;\lambda, \mu)$,
which is indeed valid thanks to (\ref{syk}).
Assume $(\ref{lin})|_{n\rightarrow n-1}$.
Applying  (\ref{yuk}) to the LHS of $(\ref{lin})|_{n=n}$, we get
\begin{align*}
\text{LHS} = \sum_{m_1 \le s_1}
q^{m_1|\alpha|}\Phi^{(1)}_q(m_1|s_1;\lambda, \mu)
\sum_{\overline{m} \le \overline{s}}
q^{\varphi(\overline{m},\overline{\alpha}) + \sum_{i=2}^n m_i\alpha_i}
\Phi^{(n-1)}_q(\overline{m}|\overline{s};
q^{m_1}\lambda, q^{s_1}\mu),
\end{align*}
where $\overline{m}, \overline{s} \in \Z_{\ge 0}^{n-1}$ and 
$\overline{\alpha} \in \Z_{\ge 0}^n$ are defined by (\ref{ask}).
Rewriting the sum over $\overline{m}$ by the induction assumption 
$(\ref{lin})|_{n\rightarrow n-1}$ yields ($\nu=\mu/\lambda$ as before)
\begin{align}
\text{LHS}&=
\sum_{m_1 \le s_1}
q^{m_1|\alpha|}\Phi^{(1)}_q(m_1|s_1;\lambda, \mu)
\sum_{\overline{\gamma} \le \overline{\alpha}}
q^{\varphi(\overline{\gamma},\overline{\alpha})
+\varphi(\overline{s},\overline{\gamma})
-\varphi(\overline{\alpha},\overline{\gamma})+\sum_{i=2}^ns_i\gamma_i}
\Phi^{(n)}_q(\overline{\gamma}|\overline{\alpha};
q^{m_1}\lambda, q^{s_1}\mu) 
\nonumber\\
&=\sum_{m_1\le s_1}q^{m_1(|\alpha|-|\overline{\gamma}|)}
\nu^{m_1}\binom{s_1}{m_1}_{\!q}\;\,
\sum_{\overline{\gamma} \le \overline{\alpha}}
q^{\xi(s,\overline{\alpha}, \overline{\gamma})}
\nu^{|\overline{\gamma}|}
\frac{(\lambda)_{m_1+|\overline{\gamma}|}
(\nu)_{m_1-s_1+|\overline{\alpha}|-|\overline{\gamma}|}}
{(\mu)_{s_1+|\overline{\alpha}|}}
\prod_{i=2}^{n+1}\binom{\alpha_i}{\gamma_i}_{\!q},
\label{oir}
\end{align}
where $\xi(s,\overline{\alpha}, \overline{\gamma})
=\varphi(\overline{\gamma}, \overline{\alpha})+\varphi(\overline{s},\overline{\gamma})
+\sum_{i=2}^ns_i\gamma_i
-\varphi(\overline{\gamma},\overline{\gamma})+
s_1|\overline{\gamma}|$.
On the other hand, the RHS of (\ref{lin}) has been written out in  
the RHS of (\ref{aoi}), which is expressed using the above
$\xi(s,\overline{\alpha}, \overline{\gamma})$ as
\begin{align}
\text{RHS} = \sum_{\gamma_1 \le \alpha_1, 
\overline{\gamma} \le \overline{\alpha}}
q^{\gamma_1(|\overline{\alpha}|-|\overline{\gamma}|+s_1)+
\xi(s,\overline{\alpha}, \overline{\gamma})}
\nu^{|\gamma|}
\frac{(\lambda)_{|\gamma|}
(\nu)_{|\alpha|-|\gamma|}}{(\mu)_{|\alpha|}}
\prod_{i=1}^{n+1}\binom{\alpha_i}{\gamma_i}_{\!q}.
\label{sin}
\end{align}
Denote the summand in (\ref{oir}) by 
$\text{LHS}(m_1,\gamma_2,\ldots, \gamma_{n+1})$ and 
the one in (\ref{sin}) by 
$\text{RHS}(\gamma_1,\gamma_2,\ldots, \gamma_{n+1})$. 
We claim 
$\sum_{m_1\le s_1} \text{LHS}(m_1,\gamma_2,\ldots, \gamma_{n+1})
= \sum_{\gamma_1 \le \alpha_1}
\text{RHS}(\gamma_1,\gamma_2,\ldots, \gamma_{n+1})$ holds 
for each fixed $\overline{\gamma} = (\gamma_2,\ldots, \gamma_{n+1})$.
In fact, the two sides possess a common overall factor 
$q^{\xi(s,\overline{\alpha}, \overline{\gamma})}
\nu^{|\overline{\gamma}|}
\frac{(\lambda)_{|\overline{\gamma}|}
(\nu)_{|\overline{\alpha}|-|\overline{\gamma}|}}{(\mu)_{|\overline{\alpha}|}}
\prod_{i=2}^{n+1}\binom{\alpha_i}{\gamma_i}_{\!q}$.
By removing it, the claim becomes 
\begin{align*}
&\sum_{m_1\le s_1}
q^{m_1(|\alpha|-|\overline{\gamma}|)}\nu^{m_1}
\frac{(q^{|\overline{\gamma}|}\lambda)_{m_1}
(q^{|\overline{\alpha}|-|\overline{\gamma}|}\nu)_{m_1-s_1}}
{(q^{|\overline{\alpha}|}\mu)_{s_1}}
\binom{s_1}{m_1}_{\!q}\\
&=\sum_{\gamma_1 \le \alpha_1}
q^{\gamma_1(|\overline{\alpha}|-|\overline{\gamma}|+s_1)}
\nu^{\gamma_1}
\frac{(q^{|\overline{\gamma}|}\lambda)_{\gamma_1}
(q^{|\overline{\alpha}|-|\overline{\gamma}|}\nu)_{\alpha_1-\gamma_1}}
{(q^{|\overline{\alpha}|}\mu)_{\alpha_1}}
\binom{\alpha_1}{\gamma_1}_{\!q}.
\end{align*}
This is simply stated as  
$f(\alpha_1, s_1; q^{|\overline{\gamma}|}\lambda,
q^{|\overline{\alpha}|}\mu) = 
f(s_1, \alpha_1; q^{|\overline{\gamma}|}\lambda,
q^{|\overline{\alpha}|}\mu)$
in terms of the function defined for $s,t \in \Z_{\ge 0}$ and  
$\nu=\mu/\lambda$ by
\begin{align}\label{mkr}
f(s,t;\lambda,\mu) 
= \sum_{i=0}^t q^{si} \nu^i
\frac{(\lambda)_i(\nu)_{t-i}}
{(\mu)_t}\binom{t}{i}_{\!q}.
\end{align}
This is verified in Lemma \ref{le:skb}.
\end{proof}

\begin{lemma}\label{le:skb}
The function (\ref{mkr}) enjoys the symmetry 
$f(s,t;\lambda,\mu) = f(t,s;\lambda,\mu)$ for
$s,t \in \Z_{\ge 0}$.
\end{lemma}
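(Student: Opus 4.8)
The plan is to pack the whole family $f(s,t;\lambda,\mu)$ into a two--variable generating function and to exhibit, after a single Heine transformation, a closed form that is manifestly symmetric in the two variables; the symmetry $f(s,t)=f(t,s)$ then drops out by comparing Taylor coefficients. Throughout I write $\nu=\mu/\lambda$ (so $\lambda\nu=\mu$) and read every power--series identity either analytically near the origin or as a formal identity over $\C(q,\lambda,\mu)$; the base cases $s=0$ or $t=0$, where the sum collapses to $(\ref{syk})|_{n=1}$, serve as a consistency check.

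First I would sum over $t$. Writing $\binom{t}{i}_q=\frac{(q)_t}{(q)_i(q)_{t-i}}$ and $q^{si}\nu^iz^t=(q^s\nu z)^iz^{t-i}$, a twofold use of the $q$-binomial theorem $\sum_{m\ge0}\frac{(a)_m}{(q)_m}x^m=\frac{(ax)_\infty}{(x)_\infty}$ gives
\begin{equation*}
\sum_{t\ge0}\frac{(\mu)_t}{(q)_t}\,f(s,t;\lambda,\mu)\,z^t
=\frac{(\lambda q^s\nu z)_\infty}{(q^s\nu z)_\infty}\cdot\frac{(\nu z)_\infty}{(z)_\infty}
=\frac{(\mu q^sz)_\infty\,(\mu z/\lambda)_\infty}{(\mu q^sz/\lambda)_\infty\,(z)_\infty}.
\end{equation*}
Absorbing the powers $q^s$ via $(a)_\infty=(a)_s(aq^s)_\infty$ rewrites the right--hand side as $\dfrac{(\mu z)_\infty}{(z)_\infty}\cdot\dfrac{(\mu z/\lambda)_s}{(\mu z)_s}$.

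Next I would multiply by $\frac{(\mu)_s}{(q)_s}w^s$ and sum over $s$. By the very definition of the basic hypergeometric series this produces
\begin{equation*}
\Psi(w,z):=\sum_{s,t\ge0}\frac{(\mu)_s(\mu)_t}{(q)_s(q)_t}\,f(s,t;\lambda,\mu)\,w^sz^t
=\frac{(\mu z)_\infty}{(z)_\infty}\;{}_2\phi_1\!\Bigl(\mu,\tfrac{\mu z}{\lambda};\mu z;q,w\Bigr).
\end{equation*}
Applying Heine's transformation ${}_2\phi_1(a,b;c;q,y)=\frac{(c/b)_\infty(by)_\infty}{(c)_\infty(y)_\infty}\,{}_2\phi_1\!\bigl(\tfrac{aby}{c},b;by;q,\tfrac{c}{b}\bigr)$ with $a=\mu$, $b=\mu z/\lambda$, $c=\mu z$, $y=w$ (so that $c/b=\lambda$, $by=\mu zw/\lambda$, $aby/c=\mu w/\lambda$) yields
\begin{equation*}
\Psi(w,z)=\frac{(\lambda)_\infty\,(\mu zw/\lambda)_\infty}{(z)_\infty\,(w)_\infty}\;
{}_2\phi_1\!\Bigl(\tfrac{\mu w}{\lambda},\tfrac{\mu z}{\lambda};\tfrac{\mu zw}{\lambda};q,\lambda\Bigr),
\end{equation*}
which is invariant under $w\leftrightarrow z$: the prefactor is symmetric, and in the ${}_2\phi_1$ the exchange merely permutes the two numerator parameters (in which a ${}_2\phi_1$ is symmetric) while leaving the denominator parameter and the argument $\lambda$ untouched. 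Hence $\Psi(w,z)=\Psi(z,w)$, and since $\frac{(\mu)_s(\mu)_t}{(q)_s(q)_t}\neq0$ for generic $q,\mu$, comparing the coefficients of $w^sz^t$ gives $f(s,t;\lambda,\mu)=f(t,s;\lambda,\mu)$.

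The part needing the most care is the parameter bookkeeping: one must track $\lambda,\mu,\nu$ so that the two infinite products delivered by the $q$-binomial theorem genuinely combine into $\frac{(\mu z)_\infty}{(z)_\infty}\frac{(\mu z/\lambda)_s}{(\mu z)_s}$, and one must pick precisely the Heine transformation above, among the three standard ones, so that the two variables being swapped sit only in the numerator parameters of the resulting ${}_2\phi_1$. Everything else is routine Cauchy/Heine manipulation.
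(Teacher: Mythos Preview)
Your proof is correct, and it takes a genuinely different route from the paper's.

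The paper argues directly on the finite sum. It expands the factor $q^{si}$ in (\ref{mkr}) using the finite identity (\ref{zs}), which introduces an auxiliary summation index $r$ and produces factors $(q^i;q^{-1})_r$. After reorganizing, the sum over $i$ is recognized as $\sum_{i}\Phi^{(1)}_q(i|t-r;q^r\lambda,q^r\mu)=1$ by the sum rule (\ref{syk}); what remains is a single sum over $r$ of the form $\sum_{r=0}^{\min(s,t)}\nu^r(-1)^rq^{r(r-1)/2}(q)_r\binom{s}{r}_q\binom{t}{r}_q\frac{(\lambda)_r}{(\mu)_r}$, which is visibly symmetric in $s$ and $t$.

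Your argument instead builds the bivariate generating function $\Psi(w,z)$, evaluates the inner sum with the $q$-binomial theorem, and then applies one Heine transformation so that $w$ and $z$ appear only as the two numerator parameters of a ${}_2\phi_1$ (plus symmetric prefactors). Each step checks out: the telescoping of the infinite products to $\frac{(\mu z)_\infty}{(z)_\infty}\frac{(\mu z/\lambda)_s}{(\mu z)_s}$ is correct, the identification with ${}_2\phi_1(\mu,\mu z/\lambda;\mu z;q,w)$ is correct, and the Heine iterate you quote is the standard second transformation with the parameter assignments you list.

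The trade-off is that the paper's proof is entirely self-contained---it uses only the two lemmas (\ref{cb})--(\ref{zs}) and the sum rule (\ref{syk}) already established---whereas yours imports Heine's transformation from outside. On the other hand, your approach explains \emph{why} the symmetry holds: $f(s,t;\lambda,\mu)$ is, up to the factor $(q)_s(q)_t/(\mu)_s(\mu)_t$, the Taylor coefficient of a function that is a ${}_2\phi_1$ symmetric in its numerator parameters, which is a more structural statement than the paper's bare computation.
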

\begin{proof}
By applying (\ref{zs}) to the factor $q^{si}$ in (\ref{mkr}), 
$f(s,t;\lambda, \mu)$ is rewritten as follows:
\begin{align*}
f(s,t;\lambda, \mu)&= \sum_{i=0}^t \nu^i
\frac{(\lambda)_i(\nu)_{t-i}}{(\mu)_t}\binom{t}{i}_{\!q}
\sum_{r=0}^s(-1)^rq^{r(r-1)/2}\binom{s}{r}_{\!q}(q^i;q^{-1})_r\\
&=\sum_{r=0}^{\min(s,t)}\sum_{i=r}^t \nu^i
\frac{(\lambda)_r(q^r\lambda)_{i-r}(\nu)_{t-i}}{(\mu)_r(q^r\mu)_{t-r}}
(q^t;q^{-1})_r\binom{t-r}{i-r}_{\!q}
(-1)^rq^{r(r-1)/2}\binom{s}{r}_{\!q}.
\end{align*}
Replacing $i$ by $i+r$ we have
\begin{align}
f(s,t;\lambda, \mu)&= \sum_{r=0}^{\min(s,t)}
\nu^r(-1)^rq^{r(r-1)/2}(q)_r\binom{s}{r}_{\!q}\binom{t}{r}_{\!q}
\frac{(\lambda)_r}{(\mu)_r}h(r,t;\lambda, \mu),
\label{lil}\\
h(r,t;\lambda, \mu)&=
\sum_{i=0}^{t-r} \nu^i
\frac{(q^r\lambda)_{i}(\nu)_{t-r-i}}{(q^r\mu)_{t-r}}\binom{t-r}{i}_{\!q}.
\nonumber
\end{align}
From $\sum_{i=0}^{t-r} \Phi^{(1)}_q(i|t-r;q^r\lambda, q^r\mu)=1$ (\ref{syk}),
we find $h(r,t;\lambda, \mu)=1$.
Then the expression (\ref{lil}) tells that 
$f(s,t;\lambda, \mu)=f(t,s;\lambda, \mu)$.
\end{proof}

{\it Proof of Theorem \ref{th:main}}.
As a summary of the arguments so far, the induction step from 
$(\ref{mrn})|_{n=n}$ to $(\ref{mrn})|_{n=n+1}$ has been established 
by the following scheme:
\begin{align*}
\begin{picture}(250,60)(0,-22)
\put(0,25){$(\ref{mrn})|_{n=n+1}$}
\put(10,-5){\vector(0,1){25}}
\put(0,-20){$(\ref{mrn})|_{n=n}$}
\put(63,5){\vector(-1,0){45}}\put(28,10){Lem.\,\ref{le:hrk}}
\put(71,3){(\ref{mrng})}
\put(120,5){\vector(-1,0){23}}\put(120,5){\vector(1,0){23}}
\put(108,10){Lem.\,\ref{le:hik}}
\put(149,3){(\ref{lin})}
\put(217,5){\vector(-1,0){45}}\put(180,10){Lem.\,\ref{le:skb}}
\put(222,3){Lem.\,\ref{le:kyk}.}
\end{picture}
\end{align*}
Since $(\ref{mrn})|_{n=1}$ is valid as explained in the beginning of the appendix, 
the ZF relation (\ref{mrn}) holds for any $n$.
This completes the proof of Theorem \ref{th:main}.
\qed

\section*{Acknowledgments}
This work is supported by 
Grants-in-Aid for Scientific Research No.~15K04892,
No.~15K13429 and No.~16H03922 from JSPS.

\end{document}